\DeclareMathOperator\DPLL{DPLL}
\newclass{\ONEDSPACE}{1DSPACE}
\newclass{\ONEDSPACEPRIME}{1DSPACE^\prime}
\DeclarePairedDelimiter\set{\lbrace}{\rbrace}
\DeclarePairedDelimiter\abs{\lvert}{\rvert}
\newcommand{\map}[2]{\set{#1 \mid #2}}
\newcommand{\BigO}{\mathcal{O}}
\newcommand{\SmallO}{o}
\newfunc{\fF}{F}
\newlang{\lL}{L}
\newcommand{\online}{online\xspace}
\newcommand{\offline}{offline\xspace}
\newcommand{\Online}{Online\xspace}
\newcommand{\offTme}{\offline Turing machine\xspace}
\newcommand{\onTme}{\online Turing machine\xspace}
\newcommand{\onTmes}{\online Turing machines\xspace}
\newcommand{\onTmeP}{\online Turing machine with shifted input\xspace}
\newcommand{\onTmesP}{\online Turing machines with shifted input\xspace}
\newcommand{\OnTmeP}{\Online Turing machine with shifted input\xspace}
\begin{document}

\title{Hard satisfiable formulas for DPLL algorithms using heuristics with small memory}

\titlerunning{Hard satisfiable formulas for DPLL with small memory}

\author{Nikita Gaevoy\inst{1,2}}

\authorrunning{N. Gaevoy}

\institute{St. Petersburg State University, \\
Universitetskaya nab., 7/9, St. Petersburg, Russia, 199034, \and
Steklov Institute of Mathematics at St. Petersburg, nab. r. Fontanki 27, \\
St. Petersburg, Russia, 191023 \\
\email{nikgaevoy@gmail.com}
}

\maketitle

\begin{abstract}
	$\DPLL$ algorithm for solving the Boolean satisfiability problem (SAT) can be represented in the form of a procedure that, using heuristics $A$ and $B$, select the variable $x$ from the input formula $\varphi$ and the value $b$ and runs recursively on the formulas $\varphi[x \coloneqq b]$ and $\varphi[x \coloneqq 1 - b]$. Exponential lower bounds on the running time of $\DPLL$ algorithms on unsatisfiable formulas follow from the lower bounds for tree-like resolution proofs. Lower bounds on satisfiable formulas are also known for some classes of $\DPLL$ algorithms such as ``myopic'' and ``drunken'' algorithms~\cite{Hirsch:2004}.

	All lower bounds are made for the classes of $\DPLL$ algorithms that limit heuristics access to the formula. In this paper we consider $\DPLL$ algorithms with heuristics that have unlimited access to the formula but use small memory. We show that for any pair of heuristics with small memory there exists a family of satisfiable formulas $\Phi_n$ such that a $\DPLL$ algorithm that uses these heuristics runs in exponential time on the formulas $\Phi_n$.
	
	\keywords{DPLL \and SAT \and {\onTmes} \and space-bounded computations \and sublinear space}
\end{abstract}

\section{Introduction}

$\DPLL$ (are named after the authors: Davis, Putnam, Logemann, Loveland~\cite{Davis:1960,Davis:1962}) algorithms are one of the most popular approaches to Boolean satisfiability problem (SAT). $\DPLL$ is an algorithm that takes the formula $\varphi$, uses heuristics $A$ and $B$ (which are the parameters of an algorithm) to choose a variable $x$ and the value $b$ that would be investigated first and makes recursive calls on formulas $\varphi[x \coloneqq b]$ and $\varphi[x \coloneqq \lnot b]$ if the first one is not satisfiable.

Every $\DPLL$ algorithm on any formula finds either its satisfying assignment or its tree-like resolution refutation. Therefore, exponential lower bounds for tree-like resolution (e.g. Tseitin formulae and its generalizations~\cite{Tseitin:1983,Urquhart:1987} and formulas based on pigeonhole principle~\cite{Haken:1985}) imply that any $\DPLL$ algorithm should work exponential time proving that corresponding formulas are unsatisfiable. However, the running time on satisfiable formulas may differ from the running time on unsatisfiable formulas and be even linear if heuristic $B$ is able to solve SAT. Moreover, satisfiable formulas are simpler for $\DPLL$-based SAT solvers that used on practice (and therefore more restrictive to the choice of its heuristics) rather than unsatisfiable ones.

Despite the fact that there is no hope to prove any nontrivial bounds for the running time of $\DPLL$ algorithms on satisfiable formulas with arbitrary polynomial time heuristics unless $\cP = \NP$, it is still interesting to prove lower bounds for $\DPLL$ algorithms that use heuristics from more narrow classes than $\cP$. Alekhnovich, Hirsch, and Itsykson~\cite{Hirsch:2004} proved exponential lower bounds on satisfiable formulas for two wide classes of $\DPLL$ algorithms: myopic $\DPLL$ and drunken $\DPLL$. Drunken $\DPLL$ has no restrictions on heuristic $A$, but the heuristic $B$ chooses its answer at random with equal probabilities. In myopic $\DPLL$ both heuristics has limited access on the input formula: they can read the whole formula with all negation signs erased and also they are able to read $n^{1 - \varepsilon}$ clauses precisely. Many formula simplification heuristics, such as elimination of a unit clauses can be simulated by myopic $\DPLL$, but others, such as subsumption heuristic (i.e. deletion of a clause that is a superset of an another clause) can not.

There are also a number of works concerning lower bounds for generalizations of $\DPLL$ algorithms. The paper~\cite{Itsykson:2011:Cut} gives lower bounds for $\DPLL$ algorithms with a cut heuristic, i.e. such additional heuristic $C$ that is able to decide not to make recursive calls of subformulas that it considered not ``perspective'' enough, and the paper~\cite{Itsykson:2017} gives lower bounds for $\DPLL(\parity)$ algorithms that can split not only by values of some variables, but also by values of its linear combinations. Papers~\cite{Cook:2009,Itsykson:2010,Itsykson:2011:Goldreich,Cook:2014} consider the generalization of $\DPLL$ designed to invert Goldreich's one-way function candidate and provide lower bounds on it.

\emph{Our contribution.} All discussed lower bounds are made for the classes of heuristics that limits access to the formula rather than computational power of a heuristic. In this work we consider ordinary $\DPLL$ with classes of deterministic heuristics that are bounded only by its space usage and not limited in its access to the formula, in particular $\DSPACE(\SmallO(\log))$. In order to prove an exponential lower bound for $\DPLL$ with heuristics from this class we use the notion of an \onTmes which can be considered as a formalization of streaming algorithms. Then we build an exponential reduction to its slight modification and prove an exponential lower bound for $\DPLL$ using \online heuristics with sublinear memory. Note that although the class of \online heuristics using sublinear memory appears to be relatively small, it is easy to see that it can express some formula simplification heuristics that can not be done using myopic or drunken algorithms, such as a subsumption of clauses that are not far away from each other (namely, at the distance $\BigO(\frac{n}{\polylog(n)})$ where $n$ is the size of the input) in the $k$-SAT formula.

\emph{Further research.} Our reduction of \offline algorithms to \online is specific to deterministic algorithms and cannot be straightforward generalized to randomized algorithms. It would be interesting to find a proper generalization of notion of an \online algorithm and a similar reduction to it for randomized algorithms. Also, despite the fact that almost all bounds presented in this work cannot be significantly improved without proving that $\cL \neq \NP$ there is an logarithmic gap between lower and upper bounds in the reduction of \offline algorithms to \online which is interesting to close.

\section{Sublinear space}

In order to work with classes of small memory we use the definition of a Turing machine with separate read-only input tape and read-write working tape. We also add separate write-only output tape for Turing machines whose output is more than one bit. By memory configuration of a Turing machine we mean a tuple of configuration of the working tape and the current state of the Turing machine.

\begin{definition}
	An \onTme is a Turing machine with the additional restriction that the input tape head can be shifted only in one direction.
\end{definition}

\begin{definition}
	$\ONEDSPACE(f)$ is the class of all languages recognized by \onTme using at most $f(n)$ memory.
\end{definition}

Obviously, any \onTme is also an \offTme, so $\ONEDSPACE(f) \subseteq \DSPACE(f)$. However, $\ONEDSPACE(f) = \DSPACE(f)$ for any $f = \Omega(n)$.

\begin{definition}
	\OnTmeP is an \onTme with a modification that the input string is shifted on the input tape to the size of the input. Thus, if the size of the input string is $n$, an \onTmeP must read $n$ empty symbols before it starts to read the input itself.
\end{definition}

\begin{definition}
	$\ONEDSPACEPRIME(f)$ is a class of all languages recognized by \onTmeP.
\end{definition}

This modification gives additional ability to read size of the input before reading input string itself. Obviously, $\ONEDSPACE(f) \subseteq \ONEDSPACEPRIME(f)$. Now we show that modified \onTmes are strictly more powerful than regular Turing machines even when the latters are restricted to a substantially larger space.

\begin{lemma}
	$\ONEDSPACEPRIME(\log n) \setminus \ONEDSPACE(o(n))$ is not empty.
\end{lemma}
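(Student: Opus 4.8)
The plan is to exhibit an explicit language in $\ONEDSPACEPRIME(\log n)$ that no \onTme can recognize in sublinear space. The separating language I would use is the \emph{middle-bit} language
\[
	\lL = \map{w \in \set{0,1}^*}{\abs{w} \text{ is even and the } (\abs{w}/2)\text{-th symbol of } w \text{ equals } 1}.
\]
The point is that deciding $\lL$ is trivial once $\abs{w}$ is known \emph{before} the input is read, whereas a one-way machine that discovers $\abs{w}$ only while streaming through $w$ has no way to pinpoint the middle symbol without, in effect, storing the whole first half.

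For the upper bound I would just describe the obvious \onTmeP: during the initial block of $n$ blanks it increments a binary counter, obtaining $n = \abs{w}$ in $\BigO(\log n)$ space; if $n$ is odd it rejects; otherwise it scans the input maintaining a second $\BigO(\log n)$-bit counter, stores the symbol it meets at position $n/2$, finishes reading the input, and accepts iff that symbol is $1$. Hence $\lL \in \ONEDSPACEPRIME(\log n)$.

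For the lower bound, suppose for contradiction that an \onTme $M$ recognizes $\lL$ with space $s(n) = \SmallO(n)$. Fix $k$ and set $m = 2k$; for $x \in \set{0,1}^k$ let $u_x = 0^k x$, a prefix of length $m$, and for $p \in \set{k+1, \dots, 2k}$ let $v_p = 0^{2p - 2k}$. Then $u_x v_p$ has even length $2p \le 4k$, and its middle symbol is the $p$-th symbol of $u_x$, namely $x_{p-k}$. If two distinct $x, x'$ left $M$ in the same memory configuration after reading $u_x$ and $u_{x'}$ respectively (note the input head is at the same position $2k+1$ in both cases), then for $p$ equal to any coordinate where $x$ and $x'$ differ, $M$ would run identically on $u_x v_p$ and on $u_{x'} v_p$ even though exactly one of these strings lies in $\lL$ — a contradiction. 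Hence the $2^k$ memory configurations $M$ reaches after reading the various $u_x$ are pairwise distinct, and each occurs in a computation on an input of length at most $4k$, so it uses at most $s(4k)$ work cells; therefore $2^{\BigO(s(4k))} \ge 2^k$, giving $s(4k) = \Omega(k)$ and contradicting $s(n) = \SmallO(n)$ for large $k$.

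The one step that needs care is the choice of hard prefixes: a suffix can move the midpoint of the combined string only onto coordinates lying in the second half of the prefix, and only subject to a parity constraint, so the prefixes must be built to differ only in those coordinates — freezing their first halves to $0^k$ and taking $m$ even accomplishes exactly this. Everything else is the standard pigeonhole-on-memory-configurations argument, the only subtlety being that the configuration count must be bounded relative to inputs of the bounded length $4k$ rather than in terms of an unconstrained $n$.
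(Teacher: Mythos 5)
Your proof is correct and follows essentially the same strategy as the paper: pick a language whose membership condition ties a specific bit of the input to the input's length (so that advance knowledge of $n$ makes it trivial in $\BigO(\log n)$ space), then run a fooling-set/configuration-counting argument showing that a one-way machine must distinguish exponentially many prefixes. The only difference is the witness language (your middle-bit language versus the paper's ``the binary representation of $\abs{s}$ is a prefix of $s$''), and your padding-by-$0^{2p-2k}$ trick plays exactly the role of the paper's padding $s \cdot 1^{[s]-k}$; both work equally well.
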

\begin{proof}
	Consider language $\lL$ consisting of binary strings $s$ such that the binary representation of $\abs s$ is a prefix of $s$. It is easy to see that $\lL \in \ONEDSPACEPRIME(\log)$. Consider arbitrary \onTme $M$ that recognizes $\lL$. We show that after reading $k$ symbols $M$ should use at least $C k$ cells on working tape for some input where $C$ is some constant depending only on $M$. Assume the opposite, then there are two distinct words $s$ and $t$ such that $\abs s = \abs t = k$, their first symbol is $1$ and $M$ moves to the same configuration after reading $s$ and $t$. Consider words $S = s \cdot 1^{[s] - k}$ and $T = t \cdot 1^{[s] - k}$, where $[s]$ means number which binary representation is $s$. Configuration of $M$ after reading $s$ and $t$ are the same, so $M$ accepts $S$ if and only if $M$ accepts $T$, but $S \in \lL$ when $T \notin \lL$ which leads us to contradiction.
\end{proof}

However, there is no difference when the space is very low.

\begin{lemma}[\cite{Stearns:1965}]\label{SmallMemoryREG}
	$\ONEDSPACE(o(\log)) = \REG$
\end{lemma}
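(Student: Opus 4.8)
The plan is to show the two inclusions separately. The inclusion $\REG \subseteq \ONEDSPACE(o(\log))$ is immediate: a finite automaton is a Turing machine that never writes on its working tape and moves its input head only to the right, so it uses $0 = o(\log n)$ space, and $\REG \subseteq \ONEDSPACE(0)$.

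For the converse, $\ONEDSPACE(o(\log)) \subseteq \REG$, the plan is to run the classical crossing-sequence / pumping argument for sublinear space one-way machines, which is essentially the content of the cited Stearns--Hartmanis--Lewis result. Fix an \onTme $M$ recognizing a language $\lL$ using at most $f(n)$ memory with $f = o(\log)$. The key observation is that, because the input head moves only in one direction, after $M$ has consumed the first $k$ input symbols its entire future behaviour is determined by its current memory configuration (working-tape contents plus state plus head position on the working tape). The number of such memory configurations on inputs of length $n$ is bounded by $2^{O(f(n))} \cdot (f(n)+1) \cdot |Q| = 2^{O(f(n))}$, which is $2^{o(\log n)} = n^{o(1)}$, hence in particular $\SmallO(n)$. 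First I would argue that $M$ can be assumed to halt on every input: a one-way machine that does not halt is in a loop on a memory configuration it has seen before while reading (or after exhausting) the input, and this can be detected and converted to rejection within the same space bound, or alternatively one simply works directly with the (finite) reachability relation on memory configurations.

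The main step is to prove that $f$ is in fact bounded by a constant, i.e.\ $M$ uses $O(1)$ space; once this is established, $M$ has only finitely many memory configurations total, so it is a finite automaton (with $\varepsilon$-moves corresponding to working-tape steps that do not advance the input head), and $\lL \in \REG$. To bound $f$, suppose for contradiction that $f$ is unbounded, so there are arbitrarily long inputs on which $M$ reaches memory configurations of large size. Pick an input $w$ of length $n$ on which $M$ uses close to $f(n)$ space. Since the number of memory configurations reachable while scanning a prefix of $w$ is $2^{o(\log n)} = \SmallO(n)$, by pigeonhole there are two positions $i < j \le n$ at which $M$ is in the same memory configuration just before reading $w_{i+1}$ and just before reading $w_{j+1}$. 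Then the factor $w_{i+1}\cdots w_j$ can be cut out (or pumped): $M$ behaves identically on $w$ and on $w_1\cdots w_i w_{j+1}\cdots w_n$, so one obtains a shorter input with the same acceptance status, and iterating yields an input of bounded length on which $M$ still uses space $\ge c\log(\text{original length})$ on a bounded-length string — contradicting $f = o(\log)$ once the lengths are made to diverge, or more carefully, contradicting that a bounded-length input can only force bounded space. The clean way to phrase this is: if $M$ uses $> s$ cells on some input, then by the pigeonhole cutting argument it uses $> s$ cells on some input of length at most $2^{o(\log)}$-many-states $\le$ (number of configurations of size $\le s$), giving a self-referential inequality $N(s) \le$ (length bound) that forces $s = O(1)$.

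The main obstacle is making the pumping/cutting argument fully rigorous in the presence of $\varepsilon$-moves (working-tape-only steps) and of non-halting runs, and in particular getting the quantitative self-referential bound right: one must count memory configurations as a function of the space used, not of the input length, and then feed the length bound produced by cutting back into the space bound. I would handle this by first normalizing $M$ to halt within a number of steps bounded by (number of memory configurations) on every input, and then carrying out the counting with $f$ replaced by the actual space used on the specific witness input. Everything else is bookkeeping.
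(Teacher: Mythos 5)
The paper gives no proof of this lemma at all --- it is imported verbatim from the cited Stearns--Hartmanis--Lewis paper --- and your argument is a reconstruction of exactly that classical cutting/crossing-sequence proof, which is essentially correct: you rightly identify that the crux is the self-referential bound (a shortest input forcing $s$ cells of work tape has length at most the number of memory configurations using fewer than $s$ cells, i.e.\ $2^{\BigO(s)}$, which together with $f = \SmallO(\log)$ forces $s = \BigO(1)$ and hence reduces $M$ to a finite automaton), and you correctly flag the two places where care is needed, namely that the cut must be made among head positions reached \emph{before} the first moment the space bound $s$ is attained, so the peak usage is not cut away, and that configurations must be counted as a function of the space actually used rather than of $n$. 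The only blemish is that your final inequality is stated backwards (``$N(s) \le$ (length bound)''): it should read that the length bound is at most $N(s) = 2^{\BigO(s)}$, whence $s \le f(2^{\BigO(s)}) = \SmallO(s)$, a contradiction for large $s$.
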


\begin{lemma}
	$\ONEDSPACEPRIME(o(\log)) = \REG$.
\end{lemma}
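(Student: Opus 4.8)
The plan is to reduce the case of $\ONEDSPACEPRIME$ to the already-known regular case for $\ONEDSPACE$ (Lemma~\ref{SmallMemoryREG}). Since $\ONEDSPACE(f) \subseteq \ONEDSPACEPRIME(f)$ trivially and $\REG = \ONEDSPACE(o(\log)) \subseteq \ONEDSPACEPRIME(o(\log))$, only the inclusion $\ONEDSPACEPRIME(o(\log)) \subseteq \REG$ needs an argument. So fix a language $\lL$ recognized by an \onTmeP $M$ using $s(n) = o(\log n)$ space, and let me show $\lL$ is regular.

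The key observation is that the prefix of $n$ blank symbols that $M$ reads before the real input can be summarized by a bounded amount of information. While $M$ scans the $n$ blanks, it sees no information other than "another blank", so its behavior is governed entirely by the sequence of memory configurations it passes through on the all-blank tape. Because $M$ uses only $s(n) = o(\log n)$ space, the number of distinct memory configurations available to it is $2^{O(s(n))} = 2^{o(\log n)} = n^{o(1)}$, which is $o(n)$ for large $n$. Hence, during the scan of $n$ blanks, the sequence of configurations must eventually become periodic: there are indices $i < j \le$ (number of configurations) with the configuration after $i$ blanks equal to that after $j$ blanks, and then the configuration after $n$ blanks depends only on $n \bmod (j - i)$ together with the finite transient. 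Thus there are only finitely many possible configurations $q_0, q_1, \dots$ that $M$ can be in when it finally reaches the first genuine input symbol — in fact, the map $n \mapsto (\text{config after $n$ blanks})$ is itself eventually periodic in $n$, so the set of reachable "post-preamble" configurations is finite, and which one occurs is determined by a regular (eventually periodic) condition on $n$.

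From here I would finish as follows. For each fixed post-preamble configuration $c$ in this finite set, run $M$ starting from $c$ on the actual input string; this is now just an ordinary \onTme (no preamble) using $o(\log)$ space on inputs of length $n$, and even though its space bound is phrased in terms of the original length, once we have committed to a starting configuration $c$ the machine is a \emph{fixed} finite object reading the input left to right, so by the same counting-plus-pigeonhole argument as in Lemma~\ref{SmallMemoryREG} — or by invoking that lemma directly after noting the space used on the true input is still $o(\log)$ — the set of inputs accepted from $c$ is a regular language $R_c$. Finally, $\lL = \bigcup_c \{\, w : (\text{config of $M$ after $|w|$ blanks}) = c \text{ and } w \in R_c \,\}$, and since the condition "config after $|w|$ blanks equals $c$" is an eventually periodic condition on $|w|$ (hence recognizable by a finite automaton that just counts $|w|$ modulo the period, ignoring letters), $\lL$ is a finite boolean combination of regular languages, therefore regular.

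The main obstacle, and the place to be careful, is the interaction between the $n$-dependence of the space bound and the fact that $M$ does not "know" $n$ before reading the preamble. One must make sure that the counting argument — "fewer than $n$ configurations, so periodicity kicks in within the first $n$ blanks" — is applied at a length $n$ large enough that $2^{O(s(n))} < n$, and that the eventual period of the configuration sequence stabilizes; a clean way to handle this is to first argue that $s$ is, without loss of generality, bounded by a constant on infinitely many lengths is \emph{not} enough, so instead one shows the sequence of all-blank configurations is ultimately periodic with a period that does not grow, using that any new configuration must appear within the first (number of configurations) steps and $s$ is non-decreasing only up to $o(\log)$. Once periodicity of the preamble behavior is nailed down, the rest is the standard Myhill–Nerode / pigeonhole packaging already used for Lemma~\ref{SmallMemoryREG}.
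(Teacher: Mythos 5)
Your proposal is correct and follows essentially the same route as the paper: analyse the deterministic configuration sequence on the all-blank preamble, conclude it is eventually periodic (else $\Omega(\log)$ space is forced), so only finitely many post-preamble configurations arise and which one occurs is an eventually periodic function of $n$, then invoke Lemma~\ref{SmallMemoryREG} for the behaviour on the true input. The only (immaterial) difference is packaging: the paper builds one machine $M_0$ simulating all post-preamble configurations in parallel and selecting the right branch at the end, whereas you write $\lL$ as a finite boolean combination of regular languages; both uses of closure under these operations are standard.
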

\begin{proof}
	Consider arbitrary \onTmeP $M$. We show that $M$ either uses $\Omega(\log)$ memory or recognizes a regular language. We split the work of $M$ into two phases. During the first phase it reads empty symbols before the start of the input and on the second phase it reads the input itself. Consider the graph of the memory configurations which $M$ passes when its input consists of infinite number of empty cells (informally, during the infinite first phase). Every vertex of this graph (i.e. configuration of $M$) has exactly one outgoing edge, since all symbols on the input tape are identical. Therefore, this graph may be either an infinite simple path or a cycle. In the case when this graph is a path $M$ should use $\Omega(\log)$ cells of memory after the first phase since no to memory configuration can appear twice. In the other case $M$ appears only in constant number of memory configurations during (and therefore after) the first phase. If $M$ uses at least $\Omega(\log)$ memory we already done, otherwise we can construct an \onTme $M_0$ which invokes the same computation as $M$ does on the second phase with all possible results of the first phase simultaneously and chooses one of them on the end of computation. Therefore, $M_0$ uses the same amount of memory as $M$ up to multiplicative constant and recognizes the same language as $M$, so by Lemma~\ref{SmallMemoryREG} $M$ either uses $\Omega(\log)$ memory or recognizes a regular language.
\end{proof}

\begin{definition}
	$\DSPACE(f, g)$ is a class of all functions computable by \onTme using $\BigO(f)$ cells of working tape and $\BigO(g)$ cells of output tape.

	Similarly, we define $\ONEDSPACE(f, g)$ and $\ONEDSPACEPRIME(f, g)$ for \onTmes and \onTmesP respectively.
\end{definition}

Note that $\DSPACE(f) = \DSPACE(f, 1)$ by definition. From now on we will consider only Turing machines with at least $\Omega(\log n)$ memory.

\begin{lemma}
	For any function $f(n) = \Omega(\log n)$ such that $f(n) \leq \frac{n}{2}$ for all sufficiently large $n$, if the function $1^n \mapsto f(n)$ belongs to $\DSPACE(\log f, \log f)$, then there exists language $\lL$ such that $\lL \in \DSPACE(\log f) \cap \ONEDSPACEPRIME(f)$ and $\lL \notin \ONEDSPACEPRIME(o(f))$.
\end{lemma}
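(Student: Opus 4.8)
The plan is to take for $\lL$ a ``prefix equals suffix'' language tuned to $f$: put into $\lL$ exactly those strings $s$ with $\abs s \geq 2 f(\abs s)$ whose first $f(\abs s)$ symbols coincide with their last $f(\abs s)$ symbols, i.e.
\[
\lL = \map{s \in \set{0,1}^*}{\abs s \geq 2 f(\abs s)\ \text{and}\ s_0 s_1 \cdots s_{f(\abs s) - 1} = s_{\abs s - f(\abs s)} \cdots s_{\abs s - 1}}.
\]
By the hypothesis $f(n) \leq n/2$ holds for all large $n$, so this condition is meaningful except on finitely many lengths, on which we simply declare no string to be in $\lL$; this finite modification affects none of the three claims. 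The point of the definition is that any algorithm must somehow account for $f(n)$ symbols arriving at position $0$ and the same number arriving at position $n - f(n)$, which forces $\Omega(f)$ online memory, while the two blocks can be matched ``locally from the two ends'' with counters bounded by $f(n)$ only.

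First I would check $\lL \in \ONEDSPACEPRIME(f)$. An \onTmeP begins by reading the $n$ blanks that precede the input; during this phase it runs the machine witnessing $1^n \mapsto f(n) \in \DSPACE(\log f, \log f)$ (each blank supplying one symbol of the unary input), ending with $f(n)$ written in binary in $\BigO(\log f)$ cells. It then reads the actual input, stores its first $f(n)$ symbols (a counter up to $f(n)$ signals when to stop), maintains a sliding window holding the last $f(n)$ symbols read so far, and on reaching the endmarker compares the stored prefix against that window. Since $f = \Omega(\log n)$, every auxiliary counter — including one that holds $n$ — fits in $\BigO(f)$ cells, so the whole computation uses $\BigO(f)$ space. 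Next, for $\lL \in \DSPACE(\log f)$, a two-way machine first computes $f(\abs s)$ by sweeping its head once from left to right across $s$ while simulating the $\DSPACE(\log f, \log f)$ machine for $1^n \mapsto f(n)$ (supplying one symbol per cell of $s$), costing $\BigO(\log f)$ space, and then returns its head to the left endmarker. In the second phase, for each $i = 0, 1, \dots, f(\abs s) - 1$ it reads symbol $i$ (moving right $i$ steps from the left endmarker) and symbol $\abs s - f(\abs s) + i$ (moving to the right endmarker and then $f(\abs s) - i$ steps left), rejecting on the first mismatch and accepting otherwise. The crucial trick is that the far position is reached by counting \emph{from the right end}, so every counter the machine ever uses is bounded by $f(\abs s)$ and it stays in $\BigO(\log f)$ space. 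This is the step I expect to be the main obstacle: one has to be sure that $f$ is touched only through the hypothesis $1^n \mapsto f(n) \in \DSPACE(\log f, \log f)$ and that no step secretly needs an $\Omega(\log n)$ counter — which is exactly why the comparison is organized symmetrically around the two endmarkers rather than by absolute positions.

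Finally I would prove $\lL \notin \ONEDSPACEPRIME(\SmallO(f))$ by a memory-configuration counting argument. Let $M$ be an \onTmeP recognizing $\lL$, and fix a large length $n$ with $f(n) \leq n/2$. For $u \in \set{0,1}^{f(n)}$ set $I_u = u \cdot 0^{n - 2 f(n)} \cdot u$; each $I_u$ has length $n$ and lies in $\lL$. Let $\gamma(u)$ be the memory configuration of $M$ immediately after it has consumed the $n$ leading blanks together with the length-$(n - f(n))$ prefix $u \cdot 0^{n - 2 f(n)}$ of $I_u$. If $\gamma(u) = \gamma(u')$ for $u \neq u'$, then from that moment $M$ behaves identically on $I_u$ and on $u' \cdot 0^{n - 2 f(n)} \cdot u \notin \lL$, since the two computations have both read the same number of symbols and face the same remaining suffix $u$; this contradicts correctness. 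Hence $u \mapsto \gamma(u)$ is injective, so $M$ passes through at least $2^{f(n)}$ distinct memory configurations on inputs of length $n$, and since a memory configuration using at most $m$ work cells is one of $2^{\BigO(m)}$ possibilities, $M$ must use $\Omega(f(n))$ cells. As this holds for all sufficiently large $n$, no \onTmeP with $\SmallO(f)$ memory recognizes $\lL$, which completes the argument.
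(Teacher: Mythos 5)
Your proof is correct and follows essentially the same route as the paper: the paper uses the language of $f(n)$-periodic strings where you use ``first $f(n)$ symbols equal last $f(n)$ symbols,'' but all three sub-arguments coincide (a $2^{f(n)}$-size fooling/distinguishing-prefix argument for the lower bound, an $\BigO(f)$ buffer for the online upper bound, and computing $f(n)$ via the hypothesis plus comparisons governed by counters bounded by $f(n)$ for the $\DSPACE(\log f)$ bound). One small point to tighten: the machine witnessing $1^n \mapsto f(n)$ is an \offTme and may move its input head both ways, so during the blank phase of your \onTmeP you should first count the $n$ blanks and then simulate that machine entirely in working memory, tracking its input-head position as a number at most $n$ --- which still fits in $\BigO(\log n) = \BigO(f)$ cells, exactly as the paper does.
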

\begin{proof}
	Let $\lL$ be the set of all $f$-periodic strings. More formally $\lL = \map{s}{\abs{s} = n \Rightarrow (\forall i < n - f(n) : s[i] = s[i + f(n)])}$.

	First we show that $\lL \in DSPACE(\log f)$. Consider \offTme that first computes the value $f(n)$, then rewinds the input tape to the start and checks equality in all pairs of symbols on distance of $f(n)$ in the input tape. It is easy to see that both parts can be done using only $\BigO(\log f)$ memory.

	Consider the computation of $f(n)$ as a function $1^n \mapsto f(n)$ on \offTme. In order to simulate this computation it suffices to store the configuration of the working tape and the position of the reading head on the input tape since all the symbols in the input tape are the same. It takes $\BigO(\log n)$ memory to store position on the input tape and $\BigO(\log f)$ memory to store configuration of working tape which is equal to $\BigO(\log n)$ in total since $f(n) \leq \frac{n}{2}$ for all sufficiently large $n$. Recall that $f = \Omega(\log n)$. Note that $f(n)$ can also be computed on an \onTmeP with $\BigO(f)$ memory by counting the number of empty symbols and simulation of the computation of function $f(n)$ on an \offTme.

	Any \online algorithm that recognizes $\lL$ should be in different configurations after reading different prefixes of the input of size $f(n)$, so $\lL \notin \ONEDSPACEPRIME(o(f))$. On the other hand, it suffices to store only the last $f(n)$ symbols of the input tape to recognize $\lL$ on an \onTme, therefore, $\lL \in \ONEDSPACEPRIME(f)$.
\end{proof}

\begin{theorem}\label{LargeOutputOffline2Online}
	For any function $f(n) = \Omega(\log\log n)$, if function $\fF$ can be computed on an \offTme $M$ using $f$ cells of working tape and binary working alphabet and $f$ can be computed on an \offTme as a function $1^n \mapsto f(n)$ using $\BigO(f \cdot 2^f \cdot \log \fF)$ memory, then $\fF \in \ONEDSPACEPRIME(f \cdot 2^f \cdot \log \fF, \log \fF)$.
\end{theorem}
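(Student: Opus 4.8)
The plan is to simulate the offline machine $M$ by an online machine with shifted input, using the "padding phase" (reading $n$ empty symbols) to precompute everything about $M$'s behavior that does not depend on the precise contents of individual input cells. Concretely, during the padding phase we compute $f(n)$ (using the hypothesis that $1^n \mapsto f(n)$ is computable in $\BigO(f \cdot 2^f \cdot \log \fF)$ space), and since $M$ has binary working alphabet and $f$ cells of working tape, the number of distinct memory configurations of $M$ (working-tape contents plus state) is $2^{\BigO(f)}$; we can enumerate them in $\BigO(f)$ space each. The key idea is the standard crossing-sequence / transition-function-composition trick for two-way machines: for each input cell $i$, we summarize the entire interaction of $M$ with the portion of the tape to the left of (and including) cell $i$ by a function $T_i$ that maps "$M$ enters cell $i{+}1$ from the left in memory configuration $c$" to "$M$ next returns to cell $i{+}1$ from the right in memory configuration $c'$, or halts, or loops". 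This $T_i$ is a partial function on the set of $2^{\BigO(f)}$ memory configurations, so it takes $\BigO(f \cdot 2^f)$ bits to store, and crucially $T_i$ is obtained from $T_{i-1}$ and the single symbol in cell $i$ by a local update (run $M$ starting just to the right of cell $i$ with one cell of "real" tape available, bouncing off $T_{i-1}$ whenever it steps left past cell $i$).

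First I would set up the formal machinery of these left-summary functions $T_i$ and verify the recurrence $T_i = \mathrm{Update}(T_{i-1}, s[i])$, being careful about termination: the update computation itself may loop, but only finitely many memory configurations exist, so loop detection costs only a constant factor in space. Second, I would describe the online machine $M'$: it reads the $n$ blanks, computing $f(n)$ and initializing $T_0$ (the empty-left-tape summary, which just reflects $M$'s behavior on the semi-infinite blank left region — this is where $f = \Omega(\log\log n)$ and the space bound on computing $f$ are used); then as it reads $s[1], s[2], \dots, s[n]$ one symbol at a time it maintains the current $T_i$, overwriting $T_{i-1}$. Third, after reading $s[n]$, the summary $T_n$ together with the known behavior of $M$ on the semi-infinite blank right region determines whether $M$ accepts/rejects and, more to the point, determines the entire output sequence of $M$: we replay $M$'s computation abstractly using $T_n$ for all leftward excursions and a symmetric fixed right-summary for rightward excursions into the blanks, emitting output symbols as they occur. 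The output tape is write-only and we only ever append, so $\BigO(\log \fF)$ output cells suffice (in fact the output-space bound is inherited directly, since $M$ uses $\BigO(\log \fF)$ output space — more carefully, $\fF$ has length at most $\fF$ so $\log \fF$ bits of output-position counter is what the online model charges, matching the claimed $\log \fF$ output bound).

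The space accounting is then routine: $T_i$ costs $\BigO(f \cdot 2^f)$ bits; simulating $M$'s working tape during an update or during the final replay costs $\BigO(f)$ bits; computing $f(n)$ costs $\BigO(f \cdot 2^f \cdot \log \fF)$ by hypothesis; and a counter for the current output position costs $\BigO(\log \fF)$; the dominant term is $\BigO(f \cdot 2^f \cdot \log \fF)$ working-tape cells and $\BigO(\log \fF)$ output cells, giving $\fF \in \ONEDSPACEPRIME(f \cdot 2^f \cdot \log \fF, \log \fF)$ as claimed.

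The main obstacle I expect is handling nontermination and the right-hand blank region cleanly. When $M$ makes an excursion to the right of cell $n$ it can wander arbitrarily far into the blanks, and when it makes an excursion left of cell $1$ it wanders into the left blanks; both of these must be folded into fixed "blank-region summary" functions that are precomputable in $\BigO(f)$ space by the same loop-detection-on-$2^{\BigO(f)}$-configurations argument, but one has to be careful that a single excursion into the blanks can itself fail to return (loop), in which case $M$ never halts and $\fF$ is undefined there — so strictly we only need the simulation to be correct when $\fF$ is defined, and we should note that $M$ being a machine computing $\fF$ means it does halt on all inputs, which rules this out. A secondary subtlety is that $f(n)$ is not literally needed by $M'$ (unlike in the preceding lemma) — what $M'$ actually needs from the padding phase is just the input length implicitly, to know when the real input begins; but computing $f(n)$ during padding is harmless and the hypothesis is stated to guarantee we have enough space budget, so I would simply remark that the padding phase is used only to detect the end-of-blanks boundary, and the $f$-computability hypothesis is what licenses the $\BigO(f \cdot 2^f \cdot \log \fF)$ space bound in the statement (it is there to be an upper bound on the whole construction, and in particular dominates everything else).
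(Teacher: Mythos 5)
Your crossing-sequence skeleton is the same as the paper's, but there is a genuine gap in how you handle the output tape, and it is exactly the point that forces the $\log \fF$ factor into the working-space bound. Your summaries $T_i$ map a boundary memory configuration to the memory configuration in which $M$ re-emerges (or to halt/loop), and you propose to emit the output only at the end by replaying the computation against $T_n$. This cannot work: $M$ may write output symbols while its head is inside cells $1,\dots,n$, i.e.\ during the very excursions that $T_n$ collapses into single steps, and two inputs can induce identical summaries and identical first-arrival configurations while $M$ produces different outputs on them --- for instance a machine that copies an input bit to the write-only output tape and immediately forgets it in its working memory. Since the online machine can never revisit those cells, the remedy is the one the paper takes: each table entry $h_k(x)$ must store, alongside the resulting configuration, the string (of length at most $\log \fF$) that $M$ prints during that excursion, with a loop marker also covering the case where the accumulated string exceeds the maximal output length (otherwise a configuration cycle that prints on every pass would blow up the entry). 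This is precisely what makes each of the $2^{\BigO(f)}$ entries cost $\BigO(f + \log \fF)$ bits and yields the $f \cdot 2^f \cdot \log \fF$ bound; your accounting, which charges $T_i$ only $\BigO(f \cdot 2^f)$ bits and attributes the $\log \fF$ factor to the hypothesis on computing $f$, misses the reason the factor is there. (Your reasoning is essentially sound for the one-bit decision version, Corollary~\ref{Offline2Online}, where the final state alone determines the answer and no output strings need be stored.)

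A secondary inaccuracy: you assert that $f(n)$ is not really needed by $M^\prime$ and that the padding phase only serves to locate the start of the input. In fact $M^\prime$ must know $f(n)$ before reading the first input symbol in order to lay out the table $h_0$ over all $2^{\BigO(f)}$ memory configurations; entries cannot be created lazily later, because $h_k$ is computed from the complete $h_{k-1}$ together with the already-consumed symbol in cell $k$. This is the actual role of the shifted-input model and of the hypothesis that $1^n \mapsto f(n)$ is computable within the stated space.
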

\begin{proof}
	Consider an \offTme $M$ that computes $F$. Without loss of generality we can assume that $M$ moves working tape head on every step and stops only when it has its input tape head on the end of the input string. By position of a Turing machine we mean position of its head on the input tape.

	We construct an \onTme $M^\prime$ that computes $\fF$. At the start $M^\prime$ reads the size of the input $n$ and finds the value of the function $f(n)$ by simulation of computation of the function $1^n \mapsto f(n)$. Then, $M^\prime$ reads its input and simulates the work of $M$. We show how to do this explicitly.

	Let $M$ be in the memory configuration $x$ at position $k$. Consider path $\rho$ that $M$ will traverse over pairs of its memory configuration and the position of the input tape head, starting from the current position until it reaches position $k + 1$ or some halting configuration. Note that $\rho$ implicitly depends on the input and can be infinite. Let $h_k(x)$ be the function that returns memory configuration at the end of $\rho$ and string that $M$ prints on the output tape in $\rho$. If $\rho$ is infinite or the string of all printed symbols is longer than $\log \fF$, i.e. the longest possible output, then $h_k(x)$ returns a special loop marker.

	Note that in order to simulate the work of $M$ it is enough for $M^\prime$ only to maintain $h_k(x)$ for all $x$ and current position $k$ and memory configuration in which $M$ first comes into position $k$. $h_0$ can be computed trivially. We show that for every $k$ function $h_{k}$ can be computed using only $h_{k - 1}$ and the $k$-th input symbol.

	Consider some fixed $k$ and memory configuration $x$. Recall that $M$ moves the input tape head on every step of its computation. Then, $M$ either moves head to the right and reaches position $k + 1$ or moves head to the left and then using $h_{k - 1}$ we can compute in which memory configuration $M$ will reach position $k$ next time and the string that $M$ will print to the output until it reaches this position. Let $g_k(x)$ be the function that computes these two values. Note that $h_k(x)$ is either equal to $g^i_k(x)$ for some $i$ or returns the loop marker. In order to recognize a cycle in $g^i_k(x)$ we compute $g^i_k(x)$ and $g^{2i}_k(x)$ in parallel and then if at some step of this computation the string printed in $g^{2i}(x)$ becomes too long or configurations computed by these two functions becomes equal, we mark $h_{k}(x)$ with the loop marker, otherwise the computation halts.

	In order to compute the value of $h_{k}(x)$ we need $\BigO(f \cdot \log \abs Q \cdot \log \fF)$ memory where $Q$ denotes the set of finite states used by $M$. Let $q$ be the number of memory configurations of $M$. Then $M^\prime$ uses $\BigO(\log q + q \log q \cdot \log \fF)$ cells of memory and hence $\fF \in \ONEDSPACEPRIME(f \cdot 2^f \cdot \log \fF, \log \fF)$.
\end{proof}

\begin{corollary}\label{Offline2Online}
	For any function $f(n) = \Omega(\log\log(n))$, if language $\lL$ can be computed on \offTme $M$ using $f$ cells of working tape and binary working alphabet and $f$ can be computed on an \offTme as a function $1^n \mapsto f(n)$ using $\BigO(f \cdot 2^f)$ memory, then $\lL \in \ONEDSPACEPRIME(f \cdot 2^f)$.
\end{corollary}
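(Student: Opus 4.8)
The plan is to obtain this as an immediate specialization of Theorem~\ref{LargeOutputOffline2Online} to the case of a one-bit output, i.e.\ to view the language $\lL$ as the function that computes its characteristic bit. Concretely, I would let $\fF$ be the function that on input $w$ prints the single symbol $1$ if $w \in \lL$ and $0$ otherwise. The \offTme $M$ that decides $\lL$ using $f$ cells of working tape and binary working alphabet computes this $\fF$ within exactly the same resources: it only has to write its accept/reject verdict to the output tape as one symbol. Since the output of $\fF$ always has length $1$, we have $\log \fF = \BigO(1)$.

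Next I would check that the hypotheses of Theorem~\ref{LargeOutputOffline2Online} hold for this $\fF$. The bound $f(n) = \Omega(\log\log n)$ is assumed in the corollary. The theorem additionally requires the map $1^n \mapsto f(n)$ to be computable on an \offTme using $\BigO(f \cdot 2^f \cdot \log \fF)$ memory; but $\log \fF = \BigO(1)$, so this requirement coincides with the $\BigO(f \cdot 2^f)$ bound that the corollary assumes. Hence the theorem applies and gives $\fF \in \ONEDSPACEPRIME(f \cdot 2^f \cdot \log \fF,\; \log \fF)$. Substituting $\log \fF = \BigO(1)$, the working-tape bound collapses to $\BigO(f \cdot 2^f)$ and the output-tape bound to $\BigO(1)$, so $\fF \in \ONEDSPACEPRIME(f \cdot 2^f, 1)$, which under the convention $\ONEDSPACEPRIME(g) = \ONEDSPACEPRIME(g, 1)$ is just $\ONEDSPACEPRIME(f \cdot 2^f)$. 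Since computing the characteristic bit of $\lL$ is the same as deciding membership in $\lL$, an \onTmeP computing $\fF$ is an \onTmeP recognizing $\lL$, and therefore $\lL \in \ONEDSPACEPRIME(f \cdot 2^f)$.

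I do not expect any genuine obstacle here, since the corollary is essentially Theorem~\ref{LargeOutputOffline2Online} read with $\log \fF$ fixed to a constant. The only points that require a little care are the bookkeeping of the $\log \fF$ factor — verifying that a length-$1$ output really does reduce it to $\BigO(1)$ both in the hypothesis on computing $f$ and in the conclusion — and the (purely notational) identification of a function with one-bit output with a language, together with the $\ONEDSPACEPRIME(g) = \ONEDSPACEPRIME(g,1)$ convention inherited from the analogous statement $\DSPACE(f) = \DSPACE(f,1)$.
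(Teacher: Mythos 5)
Your proposal is correct and is exactly the paper's argument: the corollary is obtained by viewing $\lL$ as a function with one-bit output, so that $\log \fF = \BigO(1)$ and Theorem~\ref{LargeOutputOffline2Online} specializes directly to the stated bounds. The paper states this in one line; you have simply spelled out the same bookkeeping.
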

\begin{proof}
	Immediately follows from the fact that any language is a function with only one bit output.
\end{proof}

\section{DPLL}

Consider $\DPLL_{A, B}$ algorithm for deciding the satisfiability of CNF-formula $\varphi$ parametrized by two heuristics $A$ and $B$. Heuristic $A$ takes formula $\varphi$, chooses some variable in $\varphi$ and returns its number. Heuristic $B$ takes $\varphi$ and the number of a variable (chosen by $A$) and returns a value for this variable.

\begin{comment}
	\begin{algorithm}
		\caption{$\DPLL_{A, B}$~\cite{Davis:1960,Davis:1962}}

		\begin{algorithmic}[1]
			\Procedure{$\DPLL_{A, B}$}{$\varphi$} \Comment{$\varphi$ --- formula in CNF}
				\If{$\varphi$ is empty}
					\State \Return $satisfiable$
				\EndIf

				\If{$\varphi$ contains empty clause}
					\State \Return $unsatisfiable$
				\EndIf

				\State $x \gets A(\varphi)$ \Comment{$A$ choose variable from $\varphi$}
				\State $b \gets B(\varphi, x)$ \Comment{$B$ choose value for variable $x$}

				\If{$\DPLL_{A, B}(\varphi[x = b]) = satisfiable$}
					\State \Return $satisfiable$
				\EndIf

				\State \Return $\DPLL_{A, B}(\varphi[x = \lnot b])$
			\EndProcedure
		\end{algorithmic}

	\end{algorithm}
\end{comment}

For purposes of our proof from now on we will consider the $\DPLL_H$ algorithm, which has a single heuristic for both choosing a variable and a value for it.

\begin{algorithm}
	\caption{$\DPLL_H$}

	\begin{algorithmic}[1]
		\Procedure{$\DPLL_H$}{$\varphi$} \Comment{$\varphi$ --- formula in CNF}
			\If{$\varphi$ is empty}
				\State \Return $satisfiable$
			\EndIf

			\If{$\varphi$ contains empty clause}
				\State \Return $unsatisfiable$
			\EndIf

			\State $(x, b) \gets H(\varphi)$ \Comment{$H$ choose both variable and its value}

			\If{$\DPLL_H(\varphi[x = b]) = satisfiable$}
				\State \Return $satisfiable$
			\EndIf

			\State \Return $\DPLL_H(\varphi[x = \lnot b])$
		\EndProcedure
	\end{algorithmic}

\end{algorithm}

The following lemma shows that $\DPLL_{A, B}$ and $\DPLL_H$ are almost equivalent in terms of space complexity of its heuristics.

\begin{lemma}\label{PrimeNotPrime}
	Let $A$ and $B$ use at most $f(n)$ memory on all inputs of size $n$. Then there is a heuristic $H$, such that $H$ uses at most $f(n) + \BigO(\log\log n)$ memory and $\DPLL_H$ makes the same recursive calls as $\DPLL_{A, B}$.
\end{lemma}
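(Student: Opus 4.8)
The plan is to let $H$, on input $\varphi$, simply compute $A(\varphi)$ and then $B\bigl(\varphi, A(\varphi)\bigr)$ and output the pair, so that $H(\varphi) = \bigl(A(\varphi),\, B(\varphi, A(\varphi))\bigr)$ holds exactly; then by induction on the recursion $\DPLL_H$ makes precisely the recursive calls of $\DPLL_{A, B}$. The only obstacle is that the variable number $x = A(\varphi)$ has length $\BigO(\log n)$ (the variables of $\varphi$ may be numbered up to $\BigO(n)$), which $H$ cannot afford to keep on its working tape once $f(n) = \SmallO(\log n)$, while $B$ needs two-way access to $x$ as the second component of its input, so $x$ cannot simply be streamed to $B$. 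The remedy is to never store $x$: whenever a symbol of $x$ is needed, $H$ recomputes it by re-running $A$ on $\varphi$.

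Concretely, $H$ first re-runs $A$ on $\varphi$ and copies $A$'s output verbatim onto its own output tape; this produces the variable component of $H$'s answer using no memory beyond what $A$ uses. Then $H$ simulates $B$ on the input $(\varphi, x)$ step by step, assuming (w.l.o.g.\ on the pairing function) that this input begins with the encoding of $\varphi$ that $H$ itself received. While $B$'s input head scans the $\varphi$-part, $H$ keeps its own input head at the same position and reads the requested symbol directly from its input tape. Once $B$'s head crosses into the $x$-part — detected by $H$'s input head reaching the first blank past $\varphi$ — $H$ maintains only the offset $i$ of $B$'s head inside $x$; since $|x| = \BigO(\log n)$, this offset needs only $\BigO(\log\log n)$ bits. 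To hand $B$ its $i$-th symbol of $x$, $H$ re-runs $A$ once more, counting the symbols $A$ outputs until the $i$-th appears (a counter again bounded by $|x|$, hence $\BigO(\log\log n)$ bits), records that symbol, aborts $A$, and rescans its input head back to the first blank past $\varphi$ before resuming $B$. When $B$ halts, $H$ appends $B$'s single output bit to its output tape.

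For the space bound, the only delicate point is that while $H$ re-runs $A$ it must simultaneously keep the frozen working-tape configuration of $B$. To keep the overhead additive rather than a factor of two, I would give $H$ a working alphabet equal to the (constant-size) product of the alphabets of $A$ and $B$ and store the two working tapes on two tracks of one tape of $f(n)$ cells; $A$ only ever writes on its track, $B$ only on its own, so the simulations never collide. Together with the offset counter and the re-run counter, both $\BigO(\log\log n)$ bits, and the $\BigO(1)$ of extra control, $H$ uses at most $f(n) + \BigO(\log\log n)$ cells of working tape. (Totality of $A$ and $B$, implicit in their being $\DPLL$ heuristics, guarantees that all the re-runs terminate; $H$'s running time blows up but that is irrelevant.)

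The main thing to get right — and essentially the only content — is exactly this interaction: $A$'s output can be longer than the space $H$ is allowed, and $B$ reads it two-way, so recompute-on-demand is what turns a naive $\BigO(\log n)$ overhead into $\BigO(\log\log n)$. One must check carefully that the bookkeeping (the $\varphi$-part/$x$-part distinction, the offset, and re-positioning $H$'s input head after each re-run of $A$) genuinely costs only $\BigO(\log\log n)$ additional cells and does not perturb the preserved configuration of $B$.
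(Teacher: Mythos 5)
Your proposal is correct and follows essentially the same route as the paper's proof: since $A$'s output has length $\BigO(\log n)$, the heuristic $H$ emulates $B$ and recomputes the needed symbol of $A(\varphi)$ on demand, storing only an $\BigO(\log\log n)$-bit index. You are in fact more careful than the paper's two-sentence argument about the space accounting (the two-track tape so that $A$'s re-runs and $B$'s frozen configuration share the same $f(n)$ cells), a point the paper glosses over.
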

\begin{proof}
	Let $S$ be the string returned by $A$. Since $\abs{S} = \BigO(\log n)$, memory $f(n) + \BigO(\log\log n)$ suffices to compute the $k$-th symbol of string $S$. Consider algorithm $H$ which emulates algorithm $B$ and computes symbols of string $S$ every time $B$ access them using additional $f(n) + \BigO(\log\log n)$ memory.
\end{proof}

\section{Lower bounds for the running time of DPLL algorithms on satisfiable formulae}

We need the construction of boundary expander matrices from~\cite{Hirsch:2004}.

\begin{definition}[{\cite[Definition~2.1]{Hirsch:2004}}]
	Let $A$ be Boolean matrix. For a set of rows $I$ of $A$, its boundary $\partial I$ is a set of all columns such that there exists exactly one row in $I$ that contains $1$ on the intersection. $A$ is an $(r, s, c)$-boundary expander if

	\begin{enumerate}
		\item Every row of $A$ has at most $s$ ones.

		\item For any set of rows $I$ if $\abs I \leq r$, then $\abs{\partial I} \geq c \cdot \abs I$
	\end{enumerate}
\end{definition}

The formula $\Phi_{A, \vec{b}}$ encoding the system of linear equations $A\vec{x} = \vec{b}$ is the formula constructed as follows. For each row of the matrix $A$ we construct CNF-formula $\Phi_{A, \vec{b}, i}(x)$, encoding $(\bigoplus_{j \in S_i} x_j = b[i])$, where $i$ is the number of row, and $S_i$ is the set of column numbers in which the row with the number $i$ contains ones. We take $\Phi_ {A, \vec {b}} (x) \coloneqq \bigwedge_i \Phi_{A, b, i}(x)$. Note that the resulting formula has a conjunctive normal form.
We identify the system of linear equations with the formula encoding it.

\begin{lemma}[{\cite[Theorem~3.1, Lemma~2.1, Remark~3.1]{Hirsch:2004}}]\label{Expander}
	There exists a family of Boolean matrices $(A_n)$ such that for every $n$,
	\begin{enumerate}
		\item $A_n$ has size $n \times n$.

		\item $A_n$ is a full rank matrix.

		\item Every row in $A_n$ has exactly three ones.

		\item Every column in $A_n$ has $\BigO(\log n)$ ones.

		\item $A_n$ is an $(\frac{n}{\log^{14} n}, 3, \frac{11}{13})$-boundary expander.
	\end{enumerate}
\end{lemma}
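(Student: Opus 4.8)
The plan is to use the probabilistic method. I would take $A_n$ to be the random $n \times n$ Boolean matrix in which each of the $n$ rows independently picks a uniformly random $3$-element subset of the $n$ columns in which to place its ones, so that property~3 holds by construction. It then suffices to show that properties~4 and~5 hold simultaneously with probability $1 - o(1)$; the expansion estimate is the classical first-moment argument for the existence of boundary expanders. The one genuinely delicate point, and the main obstacle, is that the plain random square matrix is typically \emph{not} of full rank, so property~2 has to be arranged separately.

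For property~4, the number of ones in a fixed column is a sum of $n$ independent $0/1$ random variables, one per row, each equal to $1$ with probability $3/n$, so it has mean $3$; a Chernoff bound shows that a fixed column carries more than $C\log n$ ones with probability at most $n^{-2}$ for a suitable constant $C$, and a union bound over the $n$ columns shows property~4 fails with probability $o(1)$.

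For property~5, fix a size $t \le r \coloneqq n/\log^{14} n$ and a set $I$ of $t$ rows, and set $U = \bigcup_{i \in I} S_i$. Counting incidences with multiplicity, $3t \ge \abs{\partial I} + 2\bigl(\abs U - \abs{\partial I}\bigr)$, since the $\abs{\partial I}$ columns hit exactly once contribute $1$ each and the remaining columns of $U$ contribute at least $2$ each; hence $\abs{\partial I} < \tfrac{11}{13}t$ forces the $t$ rows to lie inside a column set of size less than $\tfrac{25}{13}t$. The probability that $t$ fixed rows all lie inside a fixed set of $u$ columns is at most $\bigl(\binom{u}{3}/\binom{n}{3}\bigr)^{t}$, and a union bound over the $\binom{n}{t}$ choices of $I$ and the $\binom{n}{u}$ choices of the column set, using $\binom{n}{k} \le (en/k)^{k}$, bounds the failure probability for a given $t$ by a term of the shape $\bigl(\BigO(1)\cdot(t/n)^{1/13}\bigr)^{t}$; for $t \le n/\log^{14}n$ this is at most $\bigl(\BigO(\log^{-14/13} n)\bigr)^{t}$, which sums to $o(1)$ over $1 \le t \le r$. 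So with positive probability $A_n$ has properties~3,~4 and~5 simultaneously.

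The main obstacle is property~2. Boundary expansion with $c > 0$ already forces every set of at most $r$ rows to be linearly independent over $\mathbb{F}_2$: a column in $\partial I$ corresponds to a variable occurring in exactly one equation of $I$, so no nontrivial $\mathbb{F}_2$-sum of at most $r$ rows can vanish. But $r < n$, so this does not give full rank, and indeed the plain random weight-$3$ square matrix has a nontrivial kernel with high probability --- this is the regime above the satisfiability threshold for random $3$-XOR systems with equally many equations and variables. To get around it one works, as in \cite{Hirsch:2004}, with a rectangular $(1-\Omega(1))n \times n$ random expander, which has full row rank with high probability because it is below that threshold, and then adds $\Omega(n)$ further weight-$3$ rows, chosen inside the cokernel, until the matrix is $n \times n$ and of full rank, checking that this keeps all column degrees $\BigO(\log n)$ and shrinks the expansion parameters by only constant factors; the polylogarithmic slack in $r = n/\log^{14}n$ leaves room for this. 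Combining the three probabilistic estimates with this full-rank construction --- which is precisely what Remark~3.1 of \cite{Hirsch:2004} records --- yields the family $(A_n)$.
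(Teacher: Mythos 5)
This lemma is not proved in the paper at all: it is imported as a black box from \cite{Hirsch:2004} (Theorem~3.1, Lemma~2.1 and Remark~3.1 there), so there is no internal proof to compare yours against. Judged on its own terms, your probabilistic argument for properties~3--5 is correct and is essentially the argument of Lemma~2.1 and Remark~3.1 of the source: the incidence count $3t \ge \abs{\partial I} + 2(\abs{U} - \abs{\partial I})$, the reduction of an expansion failure to ``$t$ rows confined to fewer than $\frac{25}{13}t$ columns,'' the first-moment bound of shape $\bigl(\BigO(1)(t/n)^{1/13}\bigr)^t$, and the Chernoff-plus-union bound for the column degrees all check out.

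The genuine gap is exactly where you locate it --- property~2 --- but your patch is not a proof. The assertion that one can append $\Omega(n)$ further weight-$3$ rows ``chosen inside the cokernel'' while shrinking the expansion parameters by only constant factors is unsupported, and linear independence of the new rows does not protect expansion: if the old matrix contains rows with supports $\set{1,2,3}$ and $\set{1,4,5}$, then appending the row with support $\set{2,3,4}$ (which may well lie outside the old row span) creates a three-row set whose boundary is the single column $5$, destroying $(r,3,\frac{11}{13})$-expansion for every $r \ge 3$. So the new rows must be chosen to interact well with \emph{all small subsets} of old rows, and arranging that while also hitting $n - m = \Omega(n)$ new independent directions and keeping column degrees $\BigO(\log n)$ is the entire content of Theorem~3.1 of \cite{Hirsch:2004}. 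Your accounting is also internally inconsistent: if the repair really cost only constant factors you would end with $r = \Omega(n)$ and would not need the ``polylogarithmic slack'' in $r = n/\log^{14} n$ that you invoke in the same sentence; neither claim is justified. Finally, the statement that a $(1-\Omega(1))n \times n$ random weight-$3$ matrix has full row rank w.h.p.\ silently rests on the random $3$-XOR satisfiability threshold, itself a nontrivial theorem. Since the lemma functions in this paper purely as a citation, deferring the completion step to \cite{Hirsch:2004} is acceptable, but it should be flagged as the substantive step being deferred rather than presented as a routine check.
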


\begin{lemma}[{\cite[Lemma~3.7, Lemma~3.8]{Hirsch:2004}} {\cite[Corollary~3.4]{Ben-Sasson:2001}}]\label{RefutationSize}
	For any matrix $A$ which is an $(r, 3, c)$-boundary expander and any vector $b \notin Im(A)$ size of any tree-like resolution refutation of the system $A\vec{x} = \vec{b}$ must be at least $2^{\frac{cr}{2} - 3}$.
\end{lemma}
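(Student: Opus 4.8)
This statement is quoted from \cite{Hirsch:2004,Ben-Sasson:2001}, so the plan is to reconstruct the standard argument, which combines an expansion-based width lower bound with the width--size relation for tree-like resolution. First note that since $\vec b \notin Im(A)$ the system $A\vec x = \vec b$ has no solution over $\mathrm{GF}(2)$, so the CNF $\Phi_{A, \vec b}$ is unsatisfiable and does admit tree-like resolution refutations; each clause of $\Phi_{A,\vec b}$ has width $3$ because every row of $A$ has exactly three ones, hence the initial width $w(\Phi_{A,\vec b})$ equals $3$.

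The core is a width lower bound: \emph{every} resolution refutation of $\Phi_{A,\vec b}$ contains a clause of width more than $\tfrac{cr}{2}$. I would first observe that boundary expansion gives local consistency: for any set of rows $I$ with $\abs I \le r$ the subsystem $\set{A_i \vec x = b_i}_{i \in I}$ is satisfiable, proved by peeling --- since $\abs{\partial I} \ge c\abs I \ge 1$, pick a boundary column $j$ and the unique row $i_0 \in I$ meeting it, solve the smaller system on $I \setminus \set{i_0}$ by induction, and set $x_j$ so as to satisfy equation $i_0$ without disturbing the other equations (none of which contains $x_j$). Next, to each clause $C$ occurring in a refutation assign $\mu(C)$, the minimum number of equations of the system that semantically imply $C$ as a disjunction. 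Then $\mu$ equals $1$ on the axiom clauses of each $\Phi_{A,\vec b,i}$; it is subadditive under a resolution step, since the resolvent of $C_1 \lor x$ and $C_2 \lor \lnot x$ is implied by the union of minimum implying sets of $C_1 \lor x$ and $C_2 \lor \lnot x$; and by local consistency $\mu(\bot) > r$. A standard extremal argument then produces a clause $C^\ast$ with $\tfrac r2 < \mu(C^\ast) \le r$; let $I^\ast$ be a minimum implying set for it. Expansion gives $\abs{\partial I^\ast} \ge c\abs{I^\ast} > \tfrac{cr}{2}$, and minimality of $I^\ast$ forces every variable $x_j$ with $j \in \partial I^\ast$ to occur in $C^\ast$: otherwise the unique row of $I^\ast$ meeting column $j$ could be dropped, because any assignment satisfying the remaining equations can be changed only on $x_j$ to satisfy all of $I^\ast$ and this does not alter the value of $C^\ast$, contradicting minimality. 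Hence $w(C^\ast) \ge \abs{\partial I^\ast} > \tfrac{cr}{2}$.

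Finally I would invoke the width--size relation for tree-like resolution (``short proofs are narrow'' in the tree-like regime), which turns a tree-like refutation of size $S$ into one of width at most $w(\Phi_{A,\vec b}) + \log_2 S$; combined with the previous paragraph this gives $\log_2 S > \tfrac{cr}{2} - 3$, i.e. $S \ge 2^{cr/2 - 3}$. The step I expect to be most delicate is precisely the width lower bound: making the ``minimum implying set'' measure, its subadditivity, and the boundary-variable argument fully rigorous for clauses that need not respect the clause structure of the individual equation gadgets $\Phi_{A,\vec b,i}$, and checking that the constants line up so that the exponent comes out exactly as $\tfrac{cr}{2} - 3$ rather than something weaker.
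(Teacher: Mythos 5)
The paper does not prove this lemma itself but only cites it, and your reconstruction is exactly the standard argument behind those citations: the Ben--Sasson--Wigderson width lower bound via the minimum-implying-set measure on an expanding linear system, combined with the tree-like size--width relation. Your proof is correct and takes essentially the same approach as the cited sources (the only cosmetic imprecision is that the definition requires \emph{at most} three ones per row, not exactly three, which still gives initial width at most $3$ and the same bound).
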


\begin{definition}
	A subformula is called elementary if it is obtained from the original formula by substituting a single variable.
\end{definition}

\newcommand{\midC}{\frac{18}{65}}
\newcommand{\CPrime}{\frac{9}{65}}
\newcommand{\makeC}[1]{\frac{\abs{\partial #1}}{\abs{#1}}}
\newcommand{\makeCParam}[2]{\frac{\abs{\partial_{#2} #1}}{\abs{#1}}}
\newcommand{\makeUNSATBound}{2^{\frac{9 r}{130} - 3}}

Let $A$ be a matrix of size $n \times n$ satisfying the conditions of Lemma~\ref{Expander}. Consider family of functions $f_{i, j}(x) = x_i \parity x_j$.

\begin{definition}
	Consider a CNF-formula $\Phi_{i, j; b}$ encoding the system of a linear equations $A\vec{x} = \vec{b}$ with additional equation $f_{i, j}(x) = 0$. We call a pair of indices $(i, j)$ bad if $i < j$ and for some value of $b$ there exists elementary unsatisfiable subformula of formula $\Phi_{i, j; b}$ such that the size of its tree-like resolution refutation is less than $\makeUNSATBound$, where $r$ is parameter of $A$.
\end{definition}

\begin{lemma}\label{MakeUNSAT}
	Let $A$ be a matrix of size $n \times n$ satisfying the conditions of Lemma~\ref{Expander}. There exist at most $\BigO(n \log^2 n)$ bad pairs of indices.\footnote{This bound is not tight, but we need only $\SmallO(n^2)$.}
\end{lemma}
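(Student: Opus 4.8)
The plan is to bound the number of bad pairs $(i,j)$ by showing that a pair can only be bad when adding the equation $x_i \oplus x_j = 0$ and then substituting one variable destroys the expansion of $A$ enough to create a small tree-like refutation. By Lemma~\ref{RefutationSize}, a small tree-like refutation of an elementary unsatisfiable subformula of $\Phi_{i,j;b}$ can exist only if the underlying matrix of that subformula fails to be a $(r, 3, c)$-boundary expander for the relevant parameters --- concretely, only if after adding the row corresponding to $f_{i,j}$ and substituting a single variable $x_k$, the resulting system has some set of rows $I$ with $\abs I \le r$ but $\abs{\partial I} < c\abs I$. So the strategy is: take the parameters of $A$ from Lemma~\ref{Expander}, track how the three operations (add one row of weight $\le 2$ coming from $f_{i,j}$; substitute $x_k$; remove rows that become trivially true) change $\partial I$, and argue that expansion survives for all but few choices of $(i,j)$.

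The key steps, in order, would be: (1) Fix $b$ and an elementary subformula obtained by substituting $x_k \coloneqq \beta$. Note that substituting one variable changes each equation's support by at most one column and kills at most the rows containing $x_k$; since every column of $A$ has $\BigO(\log n)$ ones, at most $\BigO(\log n)$ rows are affected, plus the single new row $f_{i,j}$. (2) Show that removing/altering $\BigO(\log n)$ rows from an $(r,3,\frac{11}{13})$-expander and adding one extra row of weight $\le 2$ still leaves an $(r', 3, c')$-boundary expander with $r' = r - \BigO(\log n)$ and $c'$ only slightly smaller than $\frac{11}{13}$ --- this is a standard ``expanders are robust under few row deletions'' argument, and one checks the constants land so that $\frac{c'r'}{2} - 3 \ge \frac{9r}{130} - 3$. (3) Conclude that if the subformula is genuinely unsatisfiable and its matrix is still such an expander on a surviving coordinate subsystem, then by Lemma~\ref{RefutationSize} every tree-like refutation has size $\ge \makeUNSATBound$, contradicting badness. (4) Therefore a bad pair must force expansion failure \emph{before} these robustness margins are used up; pin down that this happens only when $x_i$ or $x_j$ lies in a small ``already nearly-dense'' region, or when $i,j$ are coupled through a bounded set of rows. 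Counting: each of the $n$ variables participates in $\BigO(\log n)$ equations and hence is ``close'' (sharing a row, or sharing a row with a row containing $x_k$) to $\BigO(\log^2 n)$ other variables, giving $\BigO(n\log^2 n)$ bad pairs.

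The main obstacle I expect is Step (2): carefully propagating the expansion constant through the combination of a row addition (which can only shrink boundaries of sets containing it) and up to $\BigO(\log n)$ row deletions (which can also shrink boundaries, since a deleted row may have been the unique witness making a column boundary). One must show the loss is additive and bounded --- each deleted row decreases $\abs{\partial I}$ by at most $3$, so $\abs{\partial I} \ge \frac{11}{13}\abs I - \BigO(\log n)$, which still exceeds $c'\abs I$ once $\abs I$ is above a $\Theta(\log n)$ threshold; for $\abs I$ below that threshold one argues directly that small unsatisfiable cores cannot arise from a full-rank consistent-looking system except in the enumerated ``bad'' configurations. Getting the arithmetic of $\frac{11}{13}$, $\midC$, $\CPrime$ and $\frac{9}{130}$ to cohere is the delicate bookkeeping, but it is routine once the structure above is in place; the genuinely new content is just the observation that a single variable substitution touches only $\BigO(\log n)$ rows, which is what makes the count $\SmallO(n^2)$.
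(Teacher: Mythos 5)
Your high-level strategy (badness forces expansion failure, then count the failures) is the right one, but two of your steps do not hold up, and the part you defer to ``one argues directly'' is in fact the entire content of the lemma. First, your accounting of the substitution in Step~(2) is wrong: substituting a single variable does not delete or alter $\BigO(\log n)$ rows of the system --- each parity constraint containing $x_k$ survives as a constraint on its remaining variables --- it deletes exactly one \emph{column}. Deleting rows of the matrix never shrinks $\abs{\partial I}$ for a set $I$ of surviving rows (the boundary depends only on the rows of $I$), whereas deleting one column shrinks every $\abs{\partial I}$ by at most $1$. Your estimate $\abs{\partial I} \geq \frac{11}{13}\abs{I} - \BigO(\log n)$ and the resulting $\Theta(\log n)$ threshold on $\abs I$ are artifacts of this mis-accounting, and they are fatal: the correct argument shows that the augmented matrix $B$ (with the weight-$2$ row for $f_{i,j}$) can fail the strengthened condition $\abs{\partial_B I'} \geq \max(2, \midC\abs{I'})$ only on sets $I'$ containing the new row together with at most \emph{three} rows of $A$ (because $\makeCParam{I'}{B} \geq c - \frac{2+c}{\abs I + 1} \geq \midC$ once $\abs I \geq 4$, with $c = \frac{11}{13}$). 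All bad configurations therefore live far below your $\Theta(\log n)$ threshold, i.e.\ precisely in the regime your sketch leaves unproven. The $\max(2,\cdot)$ in the strengthened condition is what makes the single-column deletion affordable, giving $c' \geq \min_t \frac{\max(2,\midC t)-1}{t} = \CPrime$ and hence the bound $\makeUNSATBound$ via Lemma~\ref{RefutationSize}.

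Second, your final count in Step~(4) asserts, rather than derives, that bad pairs are confined to pairs of variables ``coupled through a bounded set of rows.'' That confinement is exactly the $\abs I \leq 3$ statement above, and the $\BigO(n\log^2 n)$ bound then comes from enumerating the small witnesses: a single row of $A$ contributes its $3$ pairs of columns ($\BigO(n)$ total); two rows can only be a witness if they share two columns, which happens for $\BigO(n\log n)$ pairs of rows since each column lies in $\BigO(\log n)$ rows; and three rows can only be a witness if one of them shares a variable with each of the other two, giving $\BigO(n\log^2 n)$ triples. Your heuristic ``distance $\leq 2$ in the row-sharing graph'' happens to produce the same order of magnitude, but without the boundary computation that restricts witnesses to these three shapes, the count is not justified. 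So the proposal as written has a genuine gap at its quantitative core, even though the overall architecture matches the paper's.
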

\begin{proof}
	$\Phi_{i, j; b}$ encodes some system of a linear equations $B\vec{x} = \vec{d}$ where matrix $B$ depends on parameters $i, j$. We show that if $\abs {\partial_B I} \geq \max(2, \midC \abs{I})$ holds for any set $I$ of rows of the matrix $B$, then the pair $(i, j)$ is good. Consider an arbitrary elementary unsatisfiable subformula of $\Phi_{i, j; b}$. Let $B^\prime$ be the matrix obtained by removing the column that corresponds to a substituted variable. We show that $B^\prime$ is an $(r, 3, \CPrime)$-boundary expander.

We identify rows of the matrix $B^\prime$ with corresponding rows of the matrix $B$. Let $c^\prime \coloneqq \min_{I} \makeCParam{I}{B^\prime}$ be the parameter $c$ of $B^\prime$.

\begin{equation*}
	c^\prime = \min_{I} \makeCParam{I}{B^\prime} \geq \min_{I} \frac{\abs{\partial_{B} I} - 1}{\abs{I}} \geq \min_{I} \frac{\max(2, \midC \abs{I}) - 1}{\abs{I}}
\end{equation*}

Let function $f(t)$ be $\frac{\max(2, \midC t) - 1}{t}$. It attains its minimum of $\CPrime$ at $t = \frac{65}{9}$, so $c^\prime \geq \CPrime$ and $B^\prime$ is an $(r, 3, \CPrime)$-boundary expander. By Lemma~\ref{RefutationSize} the size of the minimal tree-like resolution refutation of the considered subformula is at least $\makeUNSATBound$ and the pair $(i, j)$ is good.

Consider a bad pair of indices $(i, j)$. This pair corresponds to some set of rows $I$ of the matrix $A$ such that the condition $\abs{\partial_B I^\prime} < \max(2, \midC \abs {I^\prime})$ holds for the set of rows $I^\prime$ obtained from $I$ by adding the row encoding $f_{i, j} = 0$. It is easy to see that $\abs{\partial_B I^\prime} \geq \abs{\partial_B I} - 2$. The condition $\abs{\partial_B I^\prime} < 2$ does not hold for $\abs{I} \geq 4$ because parameter $c$ of the matrix $A$ equals $\frac{11}{13}$, which is greater than $\frac34$. Then $\makeCParam{I^\prime}{B} \geq \frac{\abs{\partial_B I} - 2}{\abs{I} + 1} \geq \frac{c \abs{I} - 2}{\abs{I} + 1} = c - \frac{2 + c}{\abs{I} + 1}$. But $c - \frac{2 + c}{\abs{I} + 1} \geq \midC$ when $\abs{I} \geq 4$, therefore $\abs{I} \leq 3$.

Now for each $k \leq 3$ we bound the amount of bad pairs corresponding to the sets of rows of the size $k$.

Each set of the size $1$ corresponds to exactly three bad pairs. Consider a set of the size $2$. If $\abs{\partial_B I} \geq 4$, then it cannot correspond to any bad pair because $\makeCParam{I^\prime}{B} \geq \frac23 > \midC$. Otherwise this set consists of two rows that have two common variables. Each set of this type corresponds to exactly one bad pair, but there are at most $\BigO(n \log n)$ such sets because every column of the matrix $A$ contains $\BigO(\log n)$ ones and therefore every row can be in at most $\BigO(\log n)$ pairs.

The only remaining case is $\abs{I} = 3$. If $\abs{\partial_B I} \geq 4$ then $I$ can not correspond to any bad pai because $\abs{\partial_B I^\prime} \geq \abs{\partial_B I} - 2$, and $\makeCParam{I^\prime}{B} \geq \frac12 > \midC$. Moreover $\abs{\partial_B I} \geq 3$ because $c \abs{I} > 2$. Therefore, $I$ can correspond to some bad pair only of $\abs{\partial_B I} = 3$ and in this case it corresponds to exactly three bad pairs. Note that $\abs{\partial_B I} = 3$ can be only if each row in $I$ has at least one common variable with some of the remaining two. But then there is a row that has common variables with both other rows and there are at most $\BigO(n \log^2 n)$ sets of this type because every column of the matrix $A$ contains $\BigO(\log n)$ ones.

\end{proof}

\begin{definition}
	A family of unsatisfiable formulae $(\Phi_k)$ is a family of hard unsatisfiable formulae, if the size of a minimal tree-like resolution refutation of any formula from $(\Phi_k)$ is at least $2^{\frac{\ell}{\log^{c} \ell}}$ for some constant $c$, where $\ell = \Omega(\abs{\Phi_k})$.
\end{definition}

\begin{definition}
	A family of satisfiable formulae $(\Phi_k)$ is a family of hard satisfiable formulae, if the family of all its elementary unsatisfiable formulae is hard.
\end{definition}

\begin{definition}
	Two Boolean vectors are called opposite if their sum is equal to a vector of all ones.
\end{definition}

\begin{lemma}\label{Commute}
	Let $A$ be the full rank Boolean matrix. If $A$ contains an odd number of ones in each row, then operators $A$, $A^{-1}$ and the addition of a unit vector commute.
\end{lemma}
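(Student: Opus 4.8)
The plan is to work over $\mathbb{F}_2$ and to reduce all three commutation statements to a single observation: $A\vec{1} = \vec{1}$, where $\vec{1}$ is the all-ones vector (this is the ``unit vector'' of the statement, consistent with the preceding definition of opposite vectors). First I would verify this identity directly: the $i$-th coordinate of $A\vec{1}$ equals $\sum_j A_{ij}\pmod 2$, i.e.\ the number of ones in the $i$-th row of $A$ taken modulo $2$, which is $1$ by hypothesis. Applying $A^{-1}$ to both sides of $A\vec{1}=\vec{1}$ (here full rank is used, so that $A^{-1}$ exists) gives $A^{-1}\vec{1}=\vec{1}$ as well. So both $A$ and $A^{-1}$ fix the all-ones vector.

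With this in hand the three commutations are routine. The operators $A$ and $A^{-1}$ commute since $AA^{-1}=A^{-1}A=I$. Writing $T$ for the map $\vec{x}\mapsto\vec{x}+\vec{1}$, linearity of $A$ over $\mathbb{F}_2$ gives $A(T\vec{x}) = A\vec{x}+A\vec{1} = A\vec{x}+\vec{1} = T(A\vec{x})$, so $A$ and $T$ commute; the identical computation with $A^{-1}$ in place of $A$, using $A^{-1}\vec{1}=\vec{1}$, shows $A^{-1}$ and $T$ commute.

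There is essentially no obstacle here. The only point worth stating carefully is that the hypothesis ``odd number of ones in each row'' is used precisely, and only, to obtain $A\vec{1}=\vec{1}$, while the full-rank hypothesis is needed only to make $A^{-1}$ well-defined and to transport the fixed-vector property through $A^{-1}$.
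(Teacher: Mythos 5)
Your proof is correct and follows essentially the same route as the paper: both rest on the single observation that an odd number of ones per row gives $A\vec{1}=\vec{1}$, after which linearity yields the commutation. You are slightly more complete than the paper, which only verifies $A(\vec{x}\oplus\vec{1})=(A\vec{x})\oplus\vec{1}$ and declares that this suffices, whereas you also explicitly transport the fixed-point property through $A^{-1}$.
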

\begin{proof}
	It suffices to prove that $A(\vec{x} \parity \vec{1}) = (A\vec{x}) \parity \vec{1}$ for all $x$. There is an odd number of ones in each row, so $A\vec{1} = \vec{1}$. Then

	\begin{equation*}
		A(\vec{x} \parity \vec{1}) = A\vec{x} \parity A\vec{1} = (A\vec{x}) \parity \vec{1}
	\end{equation*}
\end{proof}

\begin{theorem}\label{Main}
	For any \online algorithm $H$ using $\SmallO(\frac{n}{\log n})$ memory there exists a family of pairs of hard satisfiable formulae $(\Phi_m^1, \Phi_m^2)$ satisfying the following conditions

	\begin{enumerate}
		\item $\Phi_m^j$ is a formula over $\BigO(m)$ variables.

		\item Each formula has exactly one satisfying assignment and formulae in one pair have opposite satisfying assignments.

		\item $H$ returns the same answers for formulae in one pair.
	\end{enumerate}
\end{theorem}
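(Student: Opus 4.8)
The plan is to combine the hard‑formula construction based on the boundary expander $A$ with a diagonalization argument against the heuristic $H$, using the reduction from offline to online machines (Corollary~\ref{Offline2Online}) to handle the case of sublinear‑space \emph{offline} heuristics and to reduce everything to a combinatorial statement about finitely many ``states'' of an online heuristic. The formulas will be of the shape $\Phi_{i,j;b}$: take the system $A\vec x=\vec b$, which has the unique solution $\vec x=A^{-1}\vec b$, and throw in one extra parity equation $x_i\oplus x_j=0$. If $(i,j)$ is a \emph{good} pair (Lemma~\ref{MakeUNSAT} guarantees all but $o(n^2)$ pairs are good) and the extra equation is consistent with $A^{-1}\vec b$, then $\Phi_{i,j;b}$ is satisfiable with a unique satisfying assignment, and every elementary subformula that is unsatisfiable requires a tree‑like resolution refutation of size $\makeUNSATBound = 2^{\Omega(n/\log^{14}n)}$, i.e. the family is hard satisfiable in the sense of the definitions. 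For the ``opposite assignment'' requirement I use Lemma~\ref{Commute}: since every row of $A$ has an odd number ($=3$) of ones, $A(\vec x\oplus\vec 1)=A\vec x\oplus\vec 1$, so if $\vec x^\ast=A^{-1}\vec b$ then $\vec x^\ast\oplus\vec 1=A^{-1}(\vec b\oplus\vec 1)$; and $x_i\oplus x_j=0$ is preserved under complementation. Hence the pair $\Phi^1_m:=\Phi_{i,j;b}$ and $\Phi^2_m:=\Phi_{i,j;\,b\oplus\vec 1}$ has opposite unique satisfying assignments, giving conditions~(1) and~(2) for any good $(i,j)$ and any $b$ with $b_i=b_j$ (so that the extra equation is consistent — actually one must be a little careful and also allow flipping the extra equation to $x_i\oplus x_j=1$; I will pick the side that is consistent).

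The heart of the argument is condition~(3): I must exhibit, for each $m$, a good pair $(i,j)$ and a $b$ such that $H$ on input $\Phi_{i,j;b}$ and on input $\Phi_{i,j;b\oplus\vec 1}$ behaves identically, i.e. makes the same sequence of recursive splits. First I reduce to online heuristics: by Corollary~\ref{Offline2Online} (applied with the space bound $f=\SmallO(n/\log n)$, so $f\cdot 2^f$ is still $\SmallO(\text{something subexponential})$ — I need to double‑check the parameter regime, but the point is only that an $\SmallO(n/\log n)$‑space offline heuristic is computed by \emph{some} online machine whose memory, while larger, is still $o(n)$ relative to the formula size which is $\Theta(m\log m)$ or so, because each formula $\Phi_{i,j;b}$ over $\BigO(m)$ variables has encoding size $\BigO(m\log m)$ and the clause‑by‑clause reading matters), it suffices to defeat online heuristics. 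An online heuristic, while reading the encoding of the formula left to right, has at each moment only a memory configuration from a set of size $2^{\SmallO(n/\log n)}$, which is $\SmallO(2^{n})$ but more importantly is subexponential — actually the relevant counting is: I want the number of reachable configurations after reading the ``fixed prefix'' of the encoding (the part encoding $A$ and the part encoding the choice of $(i,j)$) to be much smaller than the number of distinct pairs $(i,j)$ or the number of distinct $b$, so that two inputs differing only in the part I want them to agree on are forced into the same configuration.

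Concretely, I would encode $\Phi_{i,j;b}$ so that the clauses coming from $A$ (which is fixed) are written first, then the single extra clause‑group for $x_i\oplus x_j=0$, and finally the right‑hand side $\vec b$ interleaved appropriately — or, more cleanly, arrange the encoding so that the bits of $\vec b$ appear in a contiguous block at the \emph{end} of the input, preceded by a block that determines $(i,j)$ and by the (constant) part that determines $A$. Then run $H$ (now an online machine) up to the point just before it reads the $\vec b$‑block: it is in one of at most $2^{\SmallO(n/\log n)}$ configurations, whereas there are $2^n$ choices of $\vec b$ and they split into $2^{n-1}$ opposite pairs; by pigeonhole there exist two \emph{opposite} vectors $\vec b$ and $\vec b\oplus\vec 1$ landing in the same configuration — wait, opposite pairs need a separate pigeonhole since $\vec b$ and $\vec b\oplus\vec 1$ are a specific pairing, not arbitrary collisions, so instead I fix $(i,j)$ good and look at the map $\vec b\mapsto(\text{config of }H\text{ after the fixed prefix})$; since the prefix is the \emph{same} for all $\vec b$, $H$ is in the \emph{identical} configuration for every $\vec b$! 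So in fact no pigeonhole is needed on $\vec b$ at all — the configuration before reading $\vec b$ is independent of $\vec b$, hence $H$ reaches the same state on $\Phi_{i,j;b}$ and $\Phi_{i,j;b\oplus\vec 1}$ at that point; but after that it reads $\vec b$ versus $\vec b\oplus\vec 1$, which differ, so I do need the two continuations to agree. The fix is to make $H$ \emph{not see} the difference: choose the encoding of the right‑hand side so that what $H$ reads is not $\vec b$ but something invariant under $\vec b\mapsto\vec b\oplus\vec 1$, e.g. encode each equation's RHS implicitly through the pattern of negations in its CNF gadget in a way that the ``negation‑erased'' formula is identical — no, $H$ has full access, it sees negations. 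So the real plan must be the pigeonhole after all, but applied to the prefix determining $(i,j)$: there are $\Theta(n^2)$ candidate pairs, $o(n^2)$ of them bad, and the number of configurations of $H$ after reading the $A$‑block and the $(i,j)$‑block is $2^{\SmallO(n/\log n)}\cdot(\text{small})$ — if that is still $\SmallO(n^2)$ I could collide two good pairs, but that is not obviously true for $f$ as large as $n/\log n$. The hard part, and where I will spend the most care, is exactly this: setting up the encoding and the counting so that the number of distinguishable ``$H$‑behaviours on the variable part'' is dominated by the number of good pairs (or good $(i,j,b)$ triples with the opposite‑pair structure), using that $H$ is online with truly sublinear memory and that, by Corollary~\ref{Offline2Online}, its number of reachable memory configurations after any prefix of length $N$ is $2^{\SmallO(N/\log N)}\ll$ the relevant count; combined with Lemma~\ref{MakeUNSAT} this produces, for infinitely many $m$, a good pair and an RHS on which $H$'s entire split‑sequence coincides for the two opposite formulas, yielding condition~(3) and completing the proof.
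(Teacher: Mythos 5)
You correctly isolate the crux --- condition~(3), forcing $H$ to answer identically on two formulas with \emph{opposite} unique satisfying assignments --- and you also correctly diagnose why your own plan fails: $H$ reads the right-hand side $\vec b$, so it separates $\Phi_{i,j;\vec b}$ from $\Phi_{i,j;\vec b\oplus\vec 1}$ with constant memory (store one bit of $\vec b$), and the fallback pigeonhole over the $\Theta(n^2)$ pairs $(i,j)$ cannot work because $2^{\SmallO(n/\log n)}$ reachable configurations vastly exceeds $n^2$. Your text then stops at exactly this point (``the hard part \ldots is exactly this''), so the central step of the theorem is missing; this is a genuine gap, not a detail to be filled in.

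The paper closes it with two ideas absent from your proposal. First, an indirection that moves the assignment-determining parameter to the \emph{end} of the input: the linear system is written over auxiliary variables $x_{i,j}$ (one per ``1'' of $A$) with right-hand side $\vec q$, and only later in the encoding do the clauses $x_{i,j}=x_j\oplus d[i]$ tie them to the column variables $x_j$, so the induced assignment is $A^{-1}(\vec q\oplus\vec d)$ with $\vec q$ read \emph{before} $\vec d$. Pigeonholing the $2^m$ values of $\vec q$ against $2^{\SmallO(m)}$ configurations yields a set $Q$ of $2^{m-\SmallO(m)}$ indistinguishable right-hand sides, and the adversary chooses $\vec d$ \emph{afterwards} to translate $\{A^{-1}(\vec q\oplus\vec d):\vec q\in Q\}$ onto the target assignments; the two formulas of a pair differ in the early parameters (both drawn from indistinguishable sets), not in the suffix. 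This is the exponential-size pigeonhole you were looking for and could not get from the $(i,j)$'s. Second, a selector variable: the formula is $(\varphi_q\lor u)\land(\varphi_w\lor\lnot u)\land\cdots\land\psi_{a_0,b_0}$, carrying two candidate assignments (one per value of $u$); the parity constraint $\psi_{a_0,b_0}$ --- chosen non-constant on $A^{-1}\widetilde Q$, which is possible because $|A^{-1}\widetilde Q|\ge 2^{m-\SmallO(m)}$ forces $\Omega(m^2)$ non-constant pairs while Lemma~\ref{MakeUNSAT} disqualifies only $\BigO(m\log^2 m)$ of them --- kills exactly one candidate in each formula, giving uniqueness while letting the surviving assignments of the two paired formulas be opposite (via Lemma~\ref{Commute}). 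Without some such mechanism the direct pairing $(\vec b,\vec b\oplus\vec 1)$ cannot be salvaged. A minor additional point: Corollary~\ref{Offline2Online} plays no role inside this theorem, which is stated for \online heuristics directly; the \offline-to-\online reduction is only invoked in the subsequent corollary.
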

\begin{proof}
	Without loss of generality we can assume that $H$ prints its answer only after reading the entire input. Fix $m$.
	Let $A$ be the Boolean matrix of size $m \times m$ from Lemma~\ref{Expander}. We will construct a formula with the number of literals linear in $m$. The bit size of the resulting formula will be $\BigO(m \log m)$, so $H$ will use at most $\SmallO(\frac{m \log m}{\log(m \log m)}) = \SmallO(m)$ memory. From now on we will identify the size of the formula with the number of literals in it.

	We create a variable for each column and each ``1'' in matrix $A$. Let $x_{i, j}$ denote the variable that corresponds to the ``1'' in the $i$-th row and the $j$-th column and $x_j$ denote the variable that corresponds to the $j$-th column. Note that we have a linear number of $x_{i, j}$ variables since each row of $A$ contains exactly three ones.

	Consider an arbitrary Boolean vector $\vec{q}$ and the $i$-th row of matrix $A$. Let $a, b$ and $c$ be the column numbers corresponding to ones in the $i$-th row of $A$. Consider formula $\varphi_{q, i} \coloneqq (x_{i, a} \parity x_{i, b} \parity x_{i, c} = q[i])$. Note that CNF representation of $\varphi_{q, i}$ consists of $\BigO(1)$ literals.
	We define the formula $\varphi_q \coloneqq \bigwedge_i \varphi_{q, i}$.

	Now consider the formula $\Phi_{q, w, d; a, b} \coloneqq (\varphi_q \lor u) \land (\varphi_w \lor \lnot u) \land \bigwedge_{i, j}(x_{i, j} = x_j \parity d[i]) \land \psi_{a, b}$ where $\psi_{a, b} \coloneqq x_a \parity x_b$. $\varphi_q$ and $\varphi_w$ are formulae of size linear in $m$, so the size of the formulae $(\varphi_q \lor u)$ and $(\varphi_w \lor \lnot u)$ after conversion to CNF will also be linear, and therefore the formula $\Phi_{q, w, d; a, b}$ will have linear size.

	Consider the formula $\xi_{q, d} \coloneqq \varphi_q \land \bigwedge_{i, j} (x_{i, j} = x_j \parity d[i])$. The values of the variables $x_{i, j}$ are uniquely determined by $x_j$ according to the formula $x_{i, j} = x_j \parity d[i]$, therefore $\xi_{q, d}$ is true if and only if the condition $\bigoplus_{j \in I} (x_j \parity d[i]) = q[i]$ is satisfied for all rows of the matrix $A$, where $I$ denotes the set of columns containing one in the $i$-th row. Every row of $A$ contains exactly three ones, so this condition is equivalent to $\bigoplus_{j \in I} x_j = q[i] \parity d[i]$. The conjunction of these conditions encodes the system $A\vec{x} = \vec{q} \parity \vec{d}$ which means that $\xi_{q, d}$ has unique satisfying assignment $A^{-1} (\vec{q} \parity \vec{d})$. Therefore, for a fixed value of $u$, the formula $\Phi_{q, w, d; a, b}$ will have at most one satisfying assignment.

	The part of the formula depending on the parameter $\vec {w}$ begins later than the end of the part of the formula depending on $\vec{q}$ and ends before the part of the formula depending on $\vec {d}$, therefore $H$ will read the formula parameters in the order $q, w, d$.
	Let $Q$ be the largest set of vectors $q$ that $H$ cannot distinguish, and $W$ be the largest set of vectors $w \in Q \parity \vec{1}$ that $H$ cannot distinguish under assumption that $q \in Q$. It is easy to see that $\abs{Q} \geq 2^{m - \SmallO(m)}$, and therefore $\abs{W} \geq 2^{m - \SmallO(m)}$.
	Let $\Phi_{q, w; a, b} \coloneqq \Phi_{q, w, d; a, b}$ where arbitrary element of $W \parity \vec{1}$ is selected as $\vec{d}$.
	Consider $\widetilde{W} \coloneqq W \parity \vec{d}$ and $\widetilde{Q} \coloneqq \widetilde{W} \parity \vec{1}$. Note that $\vec{0} \in \widetilde{Q}$ (and therefore $\vec{1} \in \widetilde{W}$).

	By Lemma~\ref{Commute}, $A$ and the unit vector addition commute and $\widetilde{Q} = \widetilde{W} \parity \vec{1}$, therefore $A^{-1}\widetilde{Q} = (A^{-1}\widetilde{W}) \parity \vec{1}$.

	We choose $a_0, b_0$ in order that $\psi_{a_0, b_0}$ is not constant on the set $A^{-1}\widetilde{Q}$. We show that this can be done. We construct an equivalence relation on the coordinates of the space $\set{0, 1}^m$ (i.e. bits) as follows. $i \sim j$ if and only if $q[i] \parity q[j]$ is constant on all $q \in A^{-1}\widetilde{Q}$. There are at least $m - \SmallO(m)$ equivalence classes since $\abs{A^{-1}\widetilde{Q}} \geq 2^{m - \SmallO(m)}$. Therefore there exist $\Omega(m^2)$ functions that are not constants on $A^{-1}\widetilde{Q}$, and by Lemma~\ref{MakeUNSAT} there is also such $\psi_{a_0, b_0}$ that the size of the refutation of any unsatisfiable elementary subformula of the formula $(A\vec{x} = \vec{q}) \land \psi_{a_0, b_0}$ is exponential.

	Note that the chosen $\psi_{a_0, b_0}$ will be also non constant on the set $A^{-1}\widetilde{W}$, since these sets consist of opposite elements. Now we choose $q_0 \in \widetilde{Q}$ and $w_0 \in \widetilde{W}$ such that $\psi_{a_0, b_0}(A^{-1}\vec{q_0}) \neq \psi_{a_0, b_0}(\vec{0}) = \psi_{a_0, b_0}(A^{-1}\vec{0})$ and $\psi_{a_0, b_0}(A^{-1}\vec{w_0}) \neq \psi_{a_0, b_0}(\vec{1})$. Recall that $A$ contains exactly three ones in each row, which means that $A\vec{1} = \vec{1}$ and therefore $\psi_{a_0, b_0}(A^{-1}\vec{w_0}) \neq \psi_{a_0, b_0}(A^{-1}\vec{1})$. Consider formulae $\Phi_{\vec{0} \parity \vec{d}, \vec{w_0} \parity \vec{d}; a_0, b_0}$ and $\Phi_{\vec{q_0} \parity \vec{d}, \vec{1} \parity \vec{d}; a_0, b_0}$. Note that the corresponding formula parameters are indistinguishable for $H$ by construction, which means that $H$ answers the same on both formulae.

	We show that both formulae have exactly one satisfying assignment and their satisfying assignments are opposite.
	The formula $\Phi_{q, w; a, b}$ with $\psi_{a_0, b_0}$ removed, it has exactly one satisfying assignment for each value of $u$ ($A^{-1}(\vec{q} \parity \vec{d})$ for $u = 0$ and $A^{-1}(\vec{w} \parity \vec{d})$ for $u = 1$). For the considered formulae, $q$ takes the values $\vec{0} \parity \vec{d}$ and $q_0 \parity \vec{d}$ (and $w$, respectively, $\vec{w_0} \parity \vec{d}$ and $\vec{1} \parity \vec{d}$), on which $\psi_{a_0, b_0}$ takes different values by construction.
	Moreover, $\psi_{a_0, b_0}$ takes value $1$ on assignments corresponding to parameters $q = \vec{0} \parity \vec{d}$ and $w = \vec{1} \parity \vec{d}$, which correspond to opposite values of $u$ and are themselves opposite, since $A$ commutes with the unit vector addition by Lemma~\ref{Commute}.

	It remains to show that the constructed family of a formulae is a family of hard formulae.
	After substitution of any variable, if the formula becomes unsatisfiable, then the resulting formula has an unsatisfiable subformula of the form $\varphi_q \land \bigwedge_{i, j} (x_{i, j} = x_j \parity d[i]) \land \psi_{a, b}$, possibly with one substituted variable, the size of the refutation of which is not less than the size of the refutation of an elementary unsatisfiable subformula of a formula of the form $(A\vec{x} = \vec{q}) \land \psi_{a, b}$ which is exponential.
\end{proof}

\begin{corollary}
	For any \online heuristic $H$ that uses $\SmallO(\frac{n}{\log n})$ memory, there exists a family of satisfiable formulae such that $\DPLL_H$ makes at least $2^{\frac{\ell}{\log^c \ell}}$ recursive calls on formulae from this family for some $c$, where $\ell = \Omega(\frac{n}{\log n})$.
\end{corollary}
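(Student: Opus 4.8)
The plan is to package Theorem~\ref{Main} together with the classical fact, recalled in the introduction, that every $\DPLL$ execution on an unsatisfiable input is a tree-like resolution refutation of that input. First I would apply Theorem~\ref{Main} to the given heuristic $H$ (its memory hypothesis $\SmallO(\frac{n}{\log n})$ is exactly the one Theorem~\ref{Main} requires), obtaining the family of pairs $(\Phi_m^1, \Phi_m^2)$: each $\Phi_m^j$ is over $\BigO(m)$ variables, has a unique satisfying assignment, the two assignments in a pair are opposite, $H(\Phi_m^1) = H(\Phi_m^2)$, and each of $(\Phi_m^1)$ and $(\Phi_m^2)$ is a hard satisfiable family.

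The key observation is that agreement of $H$ on the \emph{first} split already suffices. Write $(x, b) = H(\Phi_m^1) = H(\Phi_m^2)$ and let $\alpha^1, \alpha^2$ be the unique satisfying assignments, which are opposite, so $\alpha^1[x] \neq \alpha^2[x]$. Hence $b$ disagrees with the satisfying value of $x$ in exactly one of the two formulae; call it $\Psi_m$. Then $\Psi_m[x \coloneqq b]$ has no satisfying extension, so it is an elementary unsatisfiable subformula of $\Psi_m$, and by the algorithm $\DPLL_H$ first recurses on $\Psi_m[x \coloneqq b]$ and fully explores this subtree before doing anything else. The recursion below that node is precisely the execution of $\DPLL_H$ on the CNF formula $\Psi_m[x \coloneqq b]$, which is unsatisfiable, so it yields a tree-like resolution refutation of $\Psi_m[x \coloneqq b]$ whose size is at most (a constant times) the number of recursive calls in the subtree. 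Since $(\Psi_m)$ is a hard satisfiable family, the minimal tree-like resolution refutation of $\Psi_m[x \coloneqq b]$ has size at least $2^{\frac{\ell}{\log^c \ell}}$ with $\ell = \Omega(\abs{\Psi_m})$, and therefore $\DPLL_H$ makes at least that many recursive calls (up to the constant, absorbed into $c$). Finally I would do the parameter bookkeeping: $\Psi_m$ has $\BigO(m)$ literals and bit size $n = \BigO(m \log m)$, so $m = \Omega(\frac{n}{\log n})$ and hence $\ell = \Omega(m) = \Omega(\frac{n}{\log n})$, as demanded. The family $(\Psi_m)$ is the required family of satisfiable formulae.

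I do not expect a genuine obstacle here; the corollary is essentially a repackaging of Theorem~\ref{Main}, and the only points needing a sentence of care are (i) the remark that once $\DPLL_H$ has descended into an unsatisfiable elementary subformula the lower bound on the size of that subtree is unconditional and independent of the heuristic, so no further indistinguishability of $H$ is needed, and (ii) keeping straight the three size measures — number of variables $m$, bit size $n = \Theta(m\log m)$, and number of literals $\ell = \Theta(m)$ — so that the stated bound comes out in terms of $\ell = \Omega(\frac{n}{\log n})$, with the constant loss between "recursive calls" and "refutation size" folded into the constant $c$.
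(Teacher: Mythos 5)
Your proposal is correct and follows essentially the same route as the paper: pick from each pair the formula whose unique satisfying assignment disagrees with the value $b$ chosen by $H$ (such a formula exists in each pair because the assignments are opposite and $H$ answers identically on both), note that $\DPLL_H$ then descends into an elementary unsatisfiable subformula whose recursion tree yields a tree-like resolution refutation, and invoke the hardness guaranteed by Theorem~\ref{Main}. Your write-up merely makes explicit the steps the paper's two-sentence proof leaves implicit; no gap.
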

\begin{proof}
	Consider the family of pairs of formulae from Theorem~\ref{Main} and in each pair choose the formula on which $H$ goes into an unsatisfiable subformula after the first step. The size of the minimal tree-like resolution refutation of this subformula is exponential, so $\DPLL_H$ runs exponential time on it.
\end{proof}

\begin{corollary}
	For any \offline heuristic $H$ that uses $(1 - \varepsilon) \log n$ cells of memory over the binary alphabet for some positive $\varepsilon$, there exists a family of satisfiable formulae such that $\DPLL_H$ makes at least $2^{\frac{\ell}{\log^c \ell}}$ recursive calls on formulae from this family for some $c$, where $\ell = \Omega(\frac{n}{\log n})$.
\end{corollary}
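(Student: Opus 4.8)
The plan is to reduce the \offline statement to the \online one already established, by converting the \offline heuristic into an \onTmeP heuristic of sublinear space via Theorem~\ref{LargeOutputOffline2Online} and then invoking Theorem~\ref{Main}. Fix a positive $\varepsilon$ and let $H$ be an \offline heuristic that on inputs of bit size $n$ uses at most $f(n) \coloneqq \floor{(1-\varepsilon)\log n}$ cells of a binary working tape. View $H$ as a function $\fF$ whose value on a formula is a pair (variable number, its value); then the output length is $\log \fF = \BigO(\log n)$ and $f(n) = \Omega(\log\log n)$, so the conditions on $f$ in Theorem~\ref{LargeOutputOffline2Online} hold. The remaining hypothesis, that $1^n \mapsto f(n)$ be computable by an \offTme within $\BigO(f \cdot 2^f \cdot \log \fF)$ cells, is immediate: counting $n$ in binary and extracting $\floor{(1-\varepsilon)\log n}$ costs $\BigO(\log n)$ cells, and $\BigO(\log n)$ sits comfortably inside $\BigO(f \cdot 2^f \cdot \log \fF) = \BigO(n^{1-\varepsilon}\log^2 n)$.

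Applying Theorem~\ref{LargeOutputOffline2Online} then gives $\fF \in \ONEDSPACEPRIME(f \cdot 2^f \cdot \log \fF, \log \fF)$. Next I would do the parameter bookkeeping: since $2^{f(n)} \le n^{1-\varepsilon}$ we get $f \cdot 2^f \cdot \log \fF = \BigO(n^{1-\varepsilon}\log^2 n)$, and for a fixed positive $\varepsilon$ this is $\SmallO(n/\log n)$ (equivalently $n^{1-\varepsilon}\log^3 n = \SmallO(n)$). Hence there is an \onTmeP $H'$ computing exactly the same heuristic function as $H$ using only $\SmallO(n/\log n)$ cells of working tape (plus the $\BigO(\log n)$ output cells, which are unavoidable). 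Since $H'$ and $H$ compute the same function, the deterministic recursion they drive is identical: $\DPLL_{H'}$ makes precisely the recursive calls that $\DPLL_H$ makes, on every formula.

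It then remains to apply Theorem~\ref{Main} to $H'$ and extract the number of recursive calls as in the \online corollary. The only delicate point is that Theorem~\ref{Main} is worded for a plain \online heuristic while $H'$ is an \onTmeP; but the proof of Theorem~\ref{Main} uses only that the heuristic scans the formula in a single left-to-right pass and is in at most $2^{\SmallO(m)}$ memory configurations after the $\vec q$-block, after the $\vec w$-block, and after the $\vec d$-block, and both of these remain true for an \onTmeP, the leading run of blanks adding nothing to the counting. Therefore Theorem~\ref{Main} yields, for $H'$, a family of satisfiable formulae of bit size $n = \BigO(m\log m)$ on which, after $H'$'s first choice, one falls into an elementary unsatisfiable subformula of a formula of the form $(A\vec x = \vec q) \land \psi_{a_0, b_0}$ whose minimal tree-like resolution refutation is of size $\makeUNSATBound$ (Lemma~\ref{RefutationSize}, with $r$ the expansion parameter of the $m \times m$ matrix from Lemma~\ref{Expander}); consequently $\DPLL_{H'}$, hence $\DPLL_H$, makes at least $2^{\ell/\log^c \ell}$ recursive calls for some constant $c$, with $\ell = \Omega(m) = \Omega(n/\log n)$.

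The hard part is essentially this space accounting: it is crucial that $(1-\varepsilon)\log n$ \offline cells over a binary alphabet, after the unavoidable $f \cdot 2^f$ blow-up of the \offline-to-\online simulation, still fit inside $\SmallO(n/\log n)$ — which is exactly why the constant $1-\varepsilon$ cannot be relaxed to $\log n$ itself. A more cosmetic gap is matching the \onTmeP produced by Theorem~\ref{LargeOutputOffline2Online} to the plain-\online phrasing of Theorem~\ref{Main}, which is handled by the observation above.
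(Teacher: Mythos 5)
Your proof is correct and follows essentially the same route as the paper: apply Theorem~\ref{LargeOutputOffline2Online} to obtain an equivalent \online heuristic using $\BigO(n^{1-\varepsilon}\log^2 n) = \SmallO(\frac{n}{\log n})$ memory, then invoke the \online lower bound. Your additional care in verifying the hypotheses of Theorem~\ref{LargeOutputOffline2Online} and in noting that the argument of Theorem~\ref{Main} applies verbatim to \onTmesP is a welcome filling-in of details the paper leaves implicit.
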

\begin{proof}
	By Theorem~\ref{LargeOutputOffline2Online}, there exists an equivalent \online heuristic $H^\prime$ that uses $\BigO(\log n \cdot (1 - \varepsilon) \log n \cdot 2^{(1 - \varepsilon) \log n})$ memory. It is easy to see that $\log n \cdot (1 - \varepsilon) \log n \cdot 2^{(1 - \varepsilon) \log n} = \SmallO(\frac{n}{\log n})$. Thus, $H^\prime \in \ONEDSPACEPRIME(\SmallO(\frac{n}{\log n}), \log)$.
\end{proof}

\section*{Acknowledgments}

The author is grateful to Alexander Okhotin for helpful discussions and also grateful to Edward~A.~Hirsch, who supervised this work.

\bibliographystyle{splncs04}
\bibliography{biblio}

\end{document}